\theoremstyle{definition}
\newtheorem{definition}{Definition}
\newcommand{\oket}[1]{\left | #1 \right)}
\newcommand{\Tr}{\mathrm{Tr}}
\newcommand{\abs}[1]{\left | #1 \right|}
\renewcommand{\epsilon}{\varepsilon}
\renewcommand{\O}[1]{ O\left(#1\right)}
\newcommand{\norm}[1]{\left\|#1\right\|}
\newcounter{para}
\newcommand{\dist}{\textrm{dist}}
\newcommand*\bigcdot{\mathpalette\bigcdot@{.5}}
\newcommand*\bigcdot@[2]{\mathbin{\vcenter{\hbox{\scalebox{#2}{$\m@th#1\bullet$}}}}}
\newcommand{\tr}{\text{Tr}}
\newcolumntype{L}{>{$}l<{$}} 
\newcolumntype{C}{>{$}c<{$}} 
\newcolumntype{R}{>{$}r<{$}} 
\newcommand*{\addFileDependency}[1]{
  \typeout{(#1)}
  \@addtofilelist{#1}
  \IfFileExists{#1}{}{\typeout{No file #1.}}
}
\renewcommand{\L}{\mathcal{L}}
\newcommand{\Py}{\mathbb{P}_Y}
\newcommand{\length}{\text{length}}
\newcommand{\eps}{\ensuremath{\varepsilon}}
\newcommand{\ra}{\ensuremath{\rightarrow}}
\newcommand{\ds}{\ensuremath{\displaystyle}}
\newcommand{\al}{\ensuremath{\alpha}}
\newcommand{\lam}{\ensuremath{\lambda}}
\newcommand{\Lam}{\ensuremath{\Lambda}}
\renewcommand{\norm}[1]{\left\|#1\right\|}
\newcommand{\mL}{\L}
\newcommand{\mQ}{\mathcal{Q}}
\DeclarePairedDelimiter\parentheses{\lparen}{\rparen}
\renewcommand{\O}[1]{\mathcal{O}\parentheses*{#1}}
\newcommand{\W}[1]{\Omega\parentheses*{#1}}
\newcommand{\Th}[1]{\Theta\parentheses*{#1}}
\renewcommand{\L}{\mathcal{L}}
\newcommand{\td}{\tilde}
\DeclarePairedDelimiter\ceil{\lceil}{\rceil}
\DeclarePairedDelimiter\floor{\lfloor}{\rfloor}
\newcommand{\etal}{\emph{et al.}}
\newtheorem{theorem}{Theorem}
\newtheorem{lemma}{Lemma}
\crefname{section}{Sec.}{Secs.}
\crefname{section}{Section}{Sections}
\begin{document}

\title{Clustering of steady-state correlations in open systems with long-range interactions}
\author{Andrew Y. Guo}
\email[Corresponding author: ]{guoa@umd.edu}
\affiliation{Joint Center for Quantum Information and Computer Science, NIST/University of Maryland, College Park, MD 20742, USA}
\affiliation{Joint Quantum Institute, NIST/University of Maryland, College Park, MD 20742, USA}
\author{Simon Lieu}
\thanks{The first two authors contributed equally}
\affiliation{Joint Center for Quantum Information and Computer Science, NIST/University of Maryland, College Park, MD 20742, USA}
\affiliation{Joint Quantum Institute, NIST/University of Maryland, College Park, MD 20742, USA}
\thanks{These two authors contributed equally}
\author{Minh C. Tran}
\affiliation{Joint Center for Quantum Information and Computer Science, NIST/University of Maryland, College Park, MD 20742, USA}
\affiliation{Joint Quantum Institute, NIST/University of Maryland, College Park, MD 20742, USA}
\affiliation{Center for Theoretical Physics, Massachusetts Institute of Technology, Cambridge, Massachusetts 02139, USA}
 \affiliation{Department of Physics, Harvard University, Cambridge, Massachusetts 02138, USA}
\author{Alexey V. Gorshkov}
\affiliation{Joint Center for Quantum Information and Computer Science, NIST/University of Maryland, College Park, MD 20742, USA}
\affiliation{Joint Quantum Institute, NIST/University of Maryland, College Park, MD 20742, USA}

\date{\today}

\begin{abstract}
Lieb-Robinson bounds are powerful analytical tools for constraining the dynamic and static properties of non-relativistic quantum systems.
Recently, a complete picture for closed systems that evolve unitarily in time has been achieved.
In experimental systems, however, interactions with the environment cannot generally be ignored, and the extension of Lieb-Robinson bounds to dissipative systems which evolve non-unitarily in time remains an open challenge.
In this work, we prove two Lieb-Robinson bounds that constrain the dynamics of open quantum systems with long-range interactions that decay as a power-law in the distance between particles.
Using a combination of these Lieb-Robinson bounds and mixing bounds which arise from ``reversibility''---naturally satisfied for thermal environments---we prove the clustering of correlations in the steady states of open quantum systems with long-range interactions.
Our work provides an initial step towards constraining the steady-state entanglement structure for a broad class of experimental platforms, and we highlight several open  directions regarding the application of Lieb-Robinson bounds to dissipative systems.
\end{abstract}

\maketitle

\section{Introduction}

While the  speed of information transfer  is always bounded by the speed of light, many quantum platforms operate in a non-relativistic regime where typical velocities are far below this threshold.
Nevertheless, the  Schr\"odinger equation admits fundamental limits to the rate  at which correlations  can spread throughout the system.
Such bounds are known as Lieb-Robinson bounds and are connected to a diverse array of phenomena, including the decay of correlations in the ground state \cite{Hastings2006}, generation of topological order \cite{Bravyi2006, Bravyi2010}, efficiency of classical/quantum simulation \cite{Osborne2006,Tran2019a}, hardness of bosonic sampling tasks \cite{Deshpande2018}, heating rates in periodically driven Floquet systems \cite{Abanin2015,Tran2019b}, and signatures of quantum chaos \cite{Lashkari2013,Guo2019}.

To date, most  formulations of Lieb-Robinson bounds apply to closed systems that evolve  via a unitary time-evolution operator.
In such systems, recent advances have proved tight information-transfer bounds for interaction ranges that span the whole spectrum from local \cite{ChenLucas2021graphtheory,WangHazzard2020} to highly non-local regimes \cite{Tran2019a,Chen2019,kuwaharaStrictlyLinearLight2020,Tran2021b}, and have been saturated via explicit state-transfer protocols \cite{Eldredge2017,Tran2020hierarchylinearlightcones,Tran2021}.
While a complete picture for quantum information transfer has emerged for closed quantum systems,  the analogous question for systems that evolve \textit{non-unitarily} in time remains less well understood.
For a broad range of quantum platforms (including noisy quantum simulators), interactions with a  larger environment are unavoidable and must be taken into account to accurately describe dynamics.
While progress in this direction has been made \cite{poulin2010, descamps2013, cubitt2015, Kastoryano2013, Sweke2019}, the question of how the fundamental rate of information transfer differs for systems that interact with some larger environment remains unanswered.

Indeed, the notion of a  Lieb-Robinson bound in an open system may seem \textit{a priori}  surprising from the point of view of quantum trajectories \cite{Knight1998}. In this picture,  in a time-step $dt$ the system's wavefunction either evolves via a non-unitary evolution operator, or a quantum jump discontinuously  alters the state.
A specific  trajectory belonging to a spatially-local Hamiltonian with local dissipation can transfer information faster than  the limit set by the Hamiltonian's Lieb-Robinson bound \cite{Ashida2018}. Intuitively, this is because conditioning on measurements is an inherently nonlocal process.  As an extreme example, it is possible to create a highly-entangled (GHZ) state from a product state in a time $t = \O{1}$ using only locally entangling gates  and measurements, for a specific outcome of the measurements \cite{Pham2013}.  This would violate the Lieb-Robinson bound for local systems, which gives $t = \W{r}$ for distance $r$ \cite{bigO}. After averaging over trajectories, the state of the system can be represented via a density matrix $\rho$ which evolves via a master equation: $ d \rho / dt =  \mL (\rho) $. Subsequently, the notion of a Lieb-Robinson bound is properly restored upon averaging over trajectories.

In this work, we make progress on the question of the fundamental rates of information propagation in open systems by proving a broad class of Lieb-Robinson bounds for systems with long-range interactions---specifically those that decay as a power-law $1/r^\alpha$ in the distance  $r$ between particles, for some $\alpha > 0$.
Such power-law-decaying interactions feature in experimental platforms relevant to quantum computation and simulation, such as Rydberg atoms~\cite{Saffman2010}, trapped ion crystals~\cite{Britton2012,Monroe2021}, polar molecules \cite{Yan2013}, and nitrogen-vacancy color centers in diamond \cite{Yao2012}.
In all of these platforms, interactions with a larger environment cannot be neglected, and a Markovian description of system dynamics is often justified.  In such systems, improved understanding of the fundamental rates of information transfer has spurred the development of optimal protocols for quantum information processing and state transfer \cite{Eldredge2017,Tran2021}.

In addition to bounding dynamics of open long-range systems, we use these Lieb-Robinson bounds to constrain the entanglement structure of the corresponding steady states.
For closed systems, Lieb-Robinson bounds have played an important role in proving rigorous statements on the decay of correlations in gapped ground states \cite{Hastings2006}. This justifies the use of finite-dimensional matrix-product-state representations of the ground state in one-dimensional systems with local interactions \cite{Hastings2007}.
In this work, we prove the clustering of correlations in the steady states of open power-law systems, which may serve as a first step towards establishing an area-law scaling of entanglement for these systems, similar to what was done in Ref.~\cite{Gong2017} for the closed case.

This paper is organized as follows:  in \cref{sec:open-LR}, we summarize the existing Lieb-Robinson bounds for open long-range systems and present two new bounds that are tighter for particular regimes of the power-law exponent $\alpha$.
The first yields a polynomial light cone for $\al > 2d$, using a technique pioneered in Ref.~\cite{Tran2019a}.
The second gives a linear light cone for $\al > 3$ in 1D, using the method from Ref.~\cite{Chen2019}.
In \cref{sec:clustering-of-correlations}, we also prove the clustering of correlations in the steady states of open long-range systems.
Specifically, we provide bounds on the extent of the covariance correlations and mutual information under certain assumptions on the Liouvillian mixing times.
We also prove a stability theorem for the stationary state under local Liouvillian perturbations, generalizing the results of Ref.~\cite{Kastoryano2013}.

\section{Lieb-Robinson bounds for open long-range systems}
\label{sec:open-LR}
In this section, we review the results of the previous best-known Lieb-Robinson bounds for open long-range systems and state two new Lieb-Robinson bounds.

As a general set-up, we consider evolution by a long-range Liouvillian $\L(t)$ that acts on a lattice $\Lambda$ consisting of finite-level systems at each site.
We denote by $\mathcal{H}$ the finite-dimensional Hilbert space representing all possible states of the system and by $\mathcal {B(H)}$ the space of all operators on $\mathcal{H}$.
For an operator $O \in \mathcal {B(H)}$, we will be interested in how its expectation value changes as a function of time: $\langle O(t) \rangle = \tr[ O(t) \rho] = \tr[ O \rho(t)]$, where $\rho$ is the initial state of the system, which evolves (in the Schr\"odinger  picture) via $\rho(t) = e^{\L t} \rho$. For these purposes,
the time-evolution of $O$ can be expressed as $O(t) = e^{\L^\dag t} O$, where $\L^\dag$ is the adjoint Lindblad superoperator, defined as
\begin{align}
	\L^\dag O = +i[H,O] + \sum_i \left[L_i^\dag O L_i - \frac12 \{L_i^\dag L_i,O\}\right],
\end{align}
where $H$ is the Hamiltonian and $L_i$ are Lindblad operators (also referred to as ``jump'' operators) \cite{Breuer2010}.  We emphasize that $O(t)$ is \textit{not} equivalent to the Heisenberg-Langevin time evolution for the operator $O$. For example, if the system has a unique steady state, all operators $O(t)$ will be proportional to the identity at long times: $\lim_{t \rightarrow \infty} O(t) \sim \mathbb{I}$. Thus two operators that do not commute at $t=0$ will start to commute at long times.

We will state the Lieb-Robinson-type bounds in this paper in terms of time-independent Liouvillians. However, we note that the proofs can be generalized with minor modifications to the case of time-dependent Liouvillians---i.e. those for which both $H$ and $L_i$ are allowed to vary in time.

To impose the long-range condition on $\L$, we decompose it into $\L = \sum_{Z\subset \Lambda} \L_Z$, where for any pair of sites $i,j$, we have the condition
\begin{align}
  \label{eq:L-norm-bound}
    \sum_{Z\ni i,j}\|\L_{Z}\|_{\infty} \coloneqq \sup_{O\in\mathcal{B(H)}} \frac{\|\L_{Z}O\|}{\|O\|}\le \frac{1}{\text{dist}(i,j)^\al},
\end{align}
where $\|\cdot\|$ denotes the standard operator norm, or $\infty$-norm, and  $\|\cdot\|_{\infty}$ denotes the superoperator, or ``$\infty \ra \infty$'' norm (referred to as such because the second term in \cref{eq:L-norm-bound} uses the operator $\infty$-norm in both the numerator and the denominator).
Here $\text{dist}(i,j)$ is the distance between $i$ and $j$, and $\al$ is the positive real parameter that controls the long-ranged nature of the interaction.

\subsection{Prior work on open-system Lieb-Robinson bounds}
In Ref.~\cite{Sweke2019}, Sweke \etal~generalized the Lieb-Robinson bound in Ref.~\cite{Hastings2006} for $\al > d$ to open systems.
Letting $A\in \mathcal B(X)$ be an operator supported on $X$, $K_Y\in \mathbb{L}_Y$ be a Liouvillian supported on $Y$, and $e^{\L^\dag t}$ be the evolution under the adjoint Liouvillian superoperator.
The corresponding superoperator bound is:
\begin{align}
  \norm{  K_Y(e^{\L^\dag t}A)} \leq C \|K_Y\|_{\infty} \norm A  \abs{X}\abs{Y}
    \left(
    \frac{e^{vt}-1}{r^{\alpha}}
    \right)
    ,\label{eq:LR-HK-open}
\end{align}
where $r\coloneqq d(X,Y)$, and $C$ and $v$ are $\O1$ constants.
In the closed-system picture, the conventional Lieb-Robinson-type bound can be recovered by choosing $K_Y$ such that $K_Y (O_X) = i[O_X,O_Y]$ and replacing $\|K_Y\|_{\infty}$ with $2\|O_Y\|$.

For this conventional bound, the velocity scales as $v\propto 2^\al $, which diverges in the limit $\al \ra \infty$.
To recover the Lieb-Robinson bound for short-range interacting systems, an improved bound is required that uses a slight modification of the technique from Ref.~\cite{Sweke2019}:
\begin{align}
  \begin{split}
  \norm{  K_Y(e^{\L^\dag t}A)} \leq C \|K_Y\|_{\infty} \norm A  \abs{X}\abs{Y}
    \biggl(&
    \frac{e^{\td vt}}{[(1-\mu)r]^{\alpha}}\\
    &+e^{\td vt-\mu r}
    \biggr),
  \end{split}\label{eq:LR-ZX-open}
\end{align}
where $\mu\in (0,1)$ and $\td v$ are constants, and $\td v$ is independent of $\al$.
The closed-system version of this bound was first proven in Ref.~\cite{Gong2014} for two-body interactions and later generalized to $k$-body interactions in Ref.~\cite{Tran2019b}.
In Ref.~\cite{Sweke2019}, Sweke \etal~also prove Lieb-Robinson-type bounds for $\al \le d$.
For this regime of $\al$, one needs to restrict to a finite-sized lattice, due to the energy being (in general) non-additive for subsystems \cite{Dauxois}.
Denoting the system size of the lattice by $N\coloneqq \abs{\Lambda}$,
the combined strength $J$ of the terms acting on a single site  scales as $J = \Th{N^{1-\al/d}}$ for $\al < d$ and $J = \Th{\log{N}}$ for $\al = d$ \cite{bigO,Guo2019}.
The bound then becomes:
\begin{align}
\label{eq:LR-ZX-open-small-alpha}
    \norm{  K_Y(e^{\L^\dag t}A)} \leq C \|K_Y\|_{\infty} \norm A  \abs{X}\abs{Y}
  \left(
  \frac{e^{J t}-1}{J r^{\alpha}}
  \right).
\end{align}
The effective Lieb-Robinson velocity in this case diverges in the thermodynamic limit, but is finite for all finite $N$.

\subsection{Power-law light-cone bound for $\al > 2d$}
\label{sec:power-law-bound}

We prove a Lieb-Robinson bound  for $\al >2d$ using the truncation-of-unitaries-approach presented by Tran \etal~\cite{Tran2019b}.
The technique takes as input the existing open-systems Lieb-Robinson bound in \cref{eq:LR-ZX-open} and bootstraps it to obtain a tighter bound:
\begin{align}
  \norm{  K_Y(e^{\L^\dag t}A)}
\leq C \|K_Y\|_{\infty}\norm A \frac{t^{\alpha-d}}{r^{\alpha-2d}}.
    \label{eq:LR-Minh-constX}
\end{align}

This bound yields the power-law light-cone contour $t = r^{\frac{\al-2d}{\al-d}}$.
The proof of this bound involves approximating the time evolution of the operators by a sequence of operators that  span successively larger and larger subsets of the lattice, and bounding the error of each successive approximation  by the existing Lieb-Robinson bound.
We provide the full details of the derivation in \cref{sec:minh-bound-proof}.

\subsection{Linear light-cone bound for $d=1$, $\al > 3$}
\label{sec:chen-lucas-bound}
Finally, we prove a bound with a linear light cone for open-long-range systems with $\al > 3$ in $d=1$ dimensions based on the techniques developed in Ref.~\cite{Chen2019}.
In the process, we tighten the tail of the Lieb-Robinson bound given in that work from $1/r$ to approximately $1/r^{\alpha-2}$.
The authors of Ref.~\cite{Chen2019} proved the following bound for the closed-system dynamics of Hamiltonian $H = \sum_{ij} H_{ij}$ consisting of two-body terms:
\begin{align}
\label{eq:chen-lucas-closed-bound}
	\norm{ [e^{iHt} A e^{-iHt},B]} \leq C \norm A \norm{B} \frac{t}{r},
\end{align}
where $B \in \mathcal{B}(Y)$ is an operator supported on $Y$.
Likewise assuming a two-body Liouvillian $\L = \sum_{ij} \L_{ij}$, we obtain the following open-systems bound:
\begin{align}
\label{eq:chen-lucas-open-bound}
	\norm{  K_Y(e^{\L^\dagger t} A)} \leq C \norm{K_Y}_{\infty} \norm A \frac{t}{r^{\alpha-2-o(1)}},
\end{align}
where the $o(1)$ denotes some constant that can be made arbitrarily small.
The result yields a linear light cone $t \gtrsim r$ for all $\al > 3$.
The proof roughly proceeds by expanding out the evolution operator $e^{\L^\dagger t}$ into a series of products of Liouvillian terms $\L_{ij}$.
For each term in the series, we select out a subsequence of terms that move the operator forward (i.e.~towards $Y$) and integrate out the other terms.
By only taking into account the contributions from the terms in the subsequences,  we are able to obtain a tighter Lieb-Robinson bound.
We provide the mathematical details of the proof in \cref{sec:chen-lucas-bound-proof}.

\section{Bounds on correlations in the steady states of open long-range systems}
\label{sec:clustering-of-correlations}
In this section, we prove the clustering of correlations in the steady states of open long-range systems.
We first state a lemma that describes how to use a modified version of the Lieb-Robinson bounds stated in the previous section to bound how far operators can spread under evolution by the (adjoint) Liouvillian $\L^\dagger$.
Specifically, we give a bound on the error of approximating the time-evolution of an operator $A$ supported on a site $X\in\Lam$ by a truncated adjoint Liouvillian that only acts on ball of radius $r$ centered on a site $X\in\Lam$ (see \cref{fig:LRbound_truncated}).

\begin{figure}
  \includegraphics[scale=0.875]{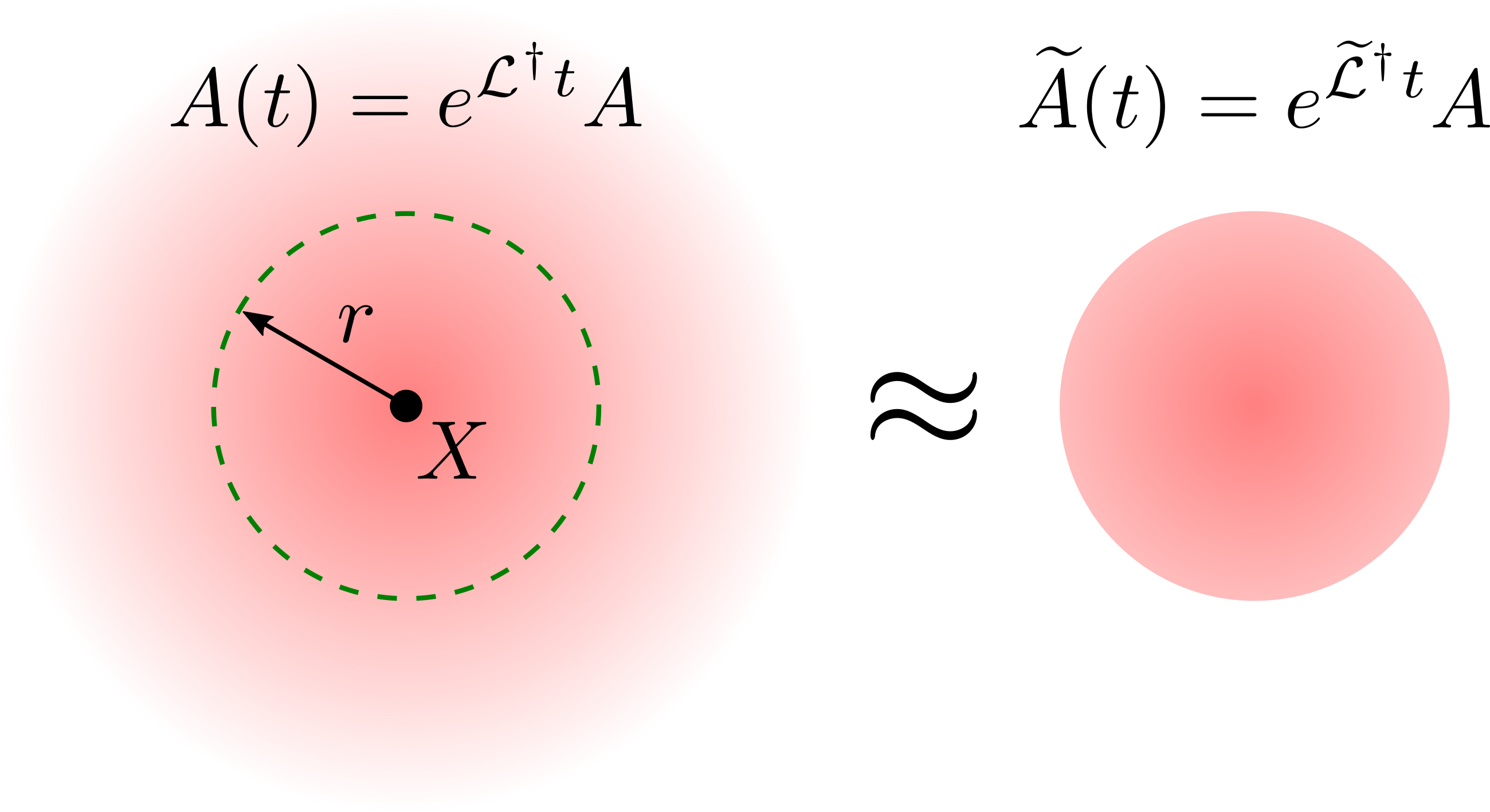}
  \caption{The evolution of an operator $A$ initially supported on $X$ by an adjoint Liouvillian $\L^\dag$ can be approximated by the same operator evolved by the truncated version of the Liouvillian, $\td \L^\dag$, supported on a ball of radius $r$ around $X$, up to an error given by $C(r,t)$.}
  \label{fig:LRbound_truncated}
\end{figure}

\begin{lemma}
[Bounds on the error incurred by approximating of time-evolved operators by local ones]
\label{lemma:LRbound_truncated}
    Let $A$ be an operator initially supported on a site $X\in\Lam$ and let $\td \L$ be the restriction of the long-range Liouvillian $\L$ to the ball of radius $r$ centered on $X$.
    Let $\tilde A(t)$ be the time-evolved version of $A$ under $\td \L^\dag$.
    Then the error in the approximation of $A(t)$ by $\tilde A(t)$ is bounded by
    \begin{align}
    \label{eq:LR-bound-approx}
        \|A(t) - \tilde A(t) \| \le K \|A\|\,\mathcal C(r,t),
    \end{align}
    where $K$ is some constant, and $\mathcal C(r,t)$ is a modified version of the standard Lieb-Robinson bound adapted to the problem of locally approximating time-evolved operators.
    In the large-$r$ limit, the tightest-known bounds for open systems with long-range interactions with $\alpha > d$ scale asymptotically as
    \begin{equation}
      \label{eq:LR-bound-cc}
      \ds
      \mathcal C(r,t) \propto \begin{cases} \ds
            e^{vt}/r^{\al-d},& \al > d,
        \\
        \ds t^{\alpha-d+1}/r^{\alpha-3d}
        , & \al > 3d,
        \\ \ds t^2/r^{\alpha-3},& \al > 3, d=1.
      \end{cases}
    \end{equation}
    For $\al \le d$, the bounds also depend on the system size of the lattice $N\coloneqq|\Lam|$.
    When $r \propto N^{1/d}$, the bounds scale as follows:
    \begin{equation}
      \label{eq:LR-bound-cc-small-alpha}
      \mathcal C(N,t) \propto \begin{cases}
        \ds\frac{e^{\Th{N^{1-\al/d}}t}-1}{\Th{N^{1-\al/d}}},& \al < d,
        \\
        \ds\frac{e^{\Th{\log(N)}t}-1}{\Th{\log(N)}},& \al= d.
      \end{cases}
    \end{equation}
    This concludes the statement of \cref{lemma:LRbound_truncated}.
\end{lemma}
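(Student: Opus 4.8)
The plan is to bootstrap the Lieb-Robinson bounds of \cref{sec:open-LR} into the local-approximation estimate \cref{eq:LR-bound-approx} by a Duhamel (interpolation) argument, so that $\mathcal C(r,t)$ inherits the shape of whichever of \cref{eq:LR-ZX-open}, \cref{eq:LR-Minh-constX}, \cref{eq:chen-lucas-open-bound} (or \cref{eq:LR-ZX-open-small-alpha}) is tightest, at the cost of one extra factor of $t$ from a time integral and one extra factor of the shell ``volume'' $\sim r^{d}$ from summing over the Liouvillian terms that straddle the boundary of the ball $B_r \coloneqq B_r(X)$. First I would write $\td\L = \sum_{Z\subseteq B_r}\L_Z$ and use
\begin{equation}
  A(t) - \td A(t) = \int_0^t\! e^{(t-s)\td\L^\dag}\bigl(\L^\dag - \td\L^\dag\bigr)e^{s\L^\dag}A \; ds,
\end{equation}
which follows by differentiating $s\mapsto e^{(t-s)\td\L^\dag}e^{s\L^\dag}A$. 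Since $\td\L$ is again a Lindbladian, $e^{\tau\td\L}$ is completely positive and trace preserving, so $e^{\tau\td\L^\dag}$ is unital and completely positive and hence a contraction in operator norm; this reduces the problem to bounding $\norm{A(t)-\td A(t)} \le \int_0^t \norm{(\L^\dag - \td\L^\dag)A(s)}\,ds$.

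Next I would expand $\L^\dag - \td\L^\dag = \sum_{Z\not\subseteq B_r}\L_Z^\dag$ and split the sum into (i) terms $Z$ lying entirely outside $B_r$ --- hence at distance $d(X,Z)\ge r$ from the site $X$ supporting $A$ --- and (ii) the remaining ``straddling'' terms, those with a site in $B_r$ (possibly $X$ itself) and a site at distance $>r$ from $X$. For class (i) I would apply the relevant Lieb-Robinson bound of \cref{sec:open-LR} with $K_Y\mapsto\L_Z^\dag$ and $Y\mapsto Z$, giving $\norm{\L_Z^\dag A(s)}\le C\norm{\L_Z}_\infty\norm A\,\abs Z\, f(d(X,Z),s)$ with $f$ the corresponding kernel (the parenthetical factor of \cref{eq:LR-ZX-open}, or $s^{\alpha-d}/r^{\alpha-2d}$ from \cref{eq:LR-Minh-constX}, or $s/r^{\alpha-2-o(1)}$ from \cref{eq:chen-lucas-open-bound}). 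Using \cref{eq:L-norm-bound} to control $\sum_{Z\ni i}\norm{\L_Z}_\infty = \O1$, the class-(i) contribution to the integrand is at most $\O1\cdot\norm A\sum_{\rho\ge r}\rho^{d-1}f(\rho,s)$, the sum running over radial shells of the lattice. This geometric sum converges precisely when $\alpha>d$, $\alpha>3d$, and (for $d=1$) $\alpha>3$ respectively, and in each case produces $\mathcal C(r,s)$ as in \cref{eq:LR-bound-cc} but with the time integral not yet performed; doing $\int_0^t ds$ then supplies the remaining power of $t$ (or is absorbed into the exponential $e^{vt}$). For $\alpha\le d$ I would feed in \cref{eq:LR-ZX-open-small-alpha} instead; taking $r\propto N^{1/d}$ makes the factor $r^{\alpha}\propto N^{\alpha/d}$ cancel against the volume and interaction-strength factors $J = \Th{N^{1-\alpha/d}}$ (resp.\ $\Th{\log N}$), leaving \cref{eq:LR-bound-cc-small-alpha} after integrating in time.

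The hard part will be class (ii): for a straddling $Z$ the Lieb-Robinson distance $d(X,Z)$ may be $\O1$, so $f(d(X,Z),s)$ is not small and a term-by-term estimate fails --- in fact the summed operator norm of the straddling terms generically grows with $r$. The way around this is to observe that a straddling $Z$ must contain two sites separated by $\gtrsim r - d(X,Z)$, so \cref{eq:L-norm-bound} forces $\norm{\L_Z}_\infty$ to be power-law small in $r$; carrying out the joint sum over the radial position of $Z$ inside $B_r$ and over its exterior site then shows that the class-(ii) contribution is dominated by class (i) in every regime, so it does not change the scaling of $\mathcal C(r,t)$. A secondary bookkeeping point, needed only for $\alpha\le d$, is to carry the $N$-dependence of $J$ consistently through both the Lieb-Robinson input and the boundary sum; for $\alpha>d$ the lattice may be taken infinite and this subtlety is absent.
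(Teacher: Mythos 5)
Your overall strategy---a Duhamel interpolation between the full and truncated evolutions, followed by feeding the open-system Lieb-Robinson bounds into the integrand term by term---is the same as the paper's (\cref{app:sim-local-obs}), but your orientation of the interpolation creates a complication that the paper's orientation avoids, and your proposed handling of it does not work as stated. The paper differentiates $s\mapsto e^{\L^\dag(t-s)}e^{\td\L^\dag s}A$, so its integrand is $(\L^\dag-\td\L^\dag)\td A(s)$ with $\td A(s)$ evolved by the \emph{truncated} Liouvillian and hence supported inside the ball; every term of $\L-\td\L$ lying entirely outside the ball then annihilates $\td A(s)$, there is no ``class (i)'' at all, and each surviving straddling term is bounded by the \emph{product} $\norm{\L_{ij}}_{\infty}\,f(\dist(i,X),s)$ of the coupling decay and the Lieb-Robinson kernel evaluated at the inner site's distance. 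Your orientation (full evolution inside) is also legitimate, and your class-(i) estimate goes through exactly as you describe, but your class-(ii) argument has a genuine gap: it is not true that a straddling $Z$ has $\norm{\L_Z}_{\infty}$ power-law small \emph{in $r$}. Its two sites are only guaranteed to be separated by $\gtrsim r-\dist(X,Z)$, which is $\O{1}$ when $Z$ sits near the boundary of the ball; nearest-neighbour terms crossing the boundary have $\O{1}$ norm, and there are $\sim r^{d-1}$ of them, so the coupling decay alone gives a class-(ii) contribution of order $t\,r^{d-1}$ (order $t$ even for $d=1$), which does not decay with $r$ and is certainly not dominated by class (i).

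The repair is the two-factor estimate: for each straddling pair $(i,j)$ with $i$ inside and $j$ outside, apply the Lieb-Robinson bound with $K_Y=\L^\dag_{ij}$ at distance $\dist(i,X)$ \emph{and} the bound $\sum_{Z\ni i,j}\norm{\L_{Z}}_{\infty}\le\dist(i,j)^{-\al}$ from \cref{eq:L-norm-bound}, i.e. $\norm{\L^\dag_{ij}A(s)}\lesssim\norm{A}\,\dist(i,j)^{-\al} f(\dist(i,X),s)$; the kernel suppresses the boundary-adjacent pairs (where $\dist(i,X)\approx r$), while the coupling decay suppresses the pairs with $i$ close to $X$. Carrying out the double sum, e.g.\ with $f(\rho,s)=s^{\al-d}/\rho^{\al-2d}$, gives $\sum_{i}\dist(i,j)^{-\al}\dist(i,X)^{-(\al-2d)}\lesssim\dist(j,X)^{-(\al-2d)}$, then $\sum_{j}\dist(j,X)^{-(\al-2d)}\lesssim r^{3d-\al}$, and finally $t^{\al-d+1}/r^{\al-3d}$ after the time integral, matching the second line of \cref{eq:LR-bound-cc}; the other lines follow identically. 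With that replacement (or by flipping the interpolation to the paper's orientation, so that only straddling terms survive in the first place) your proof goes through.
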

The proof of \cref{lemma:LRbound_truncated} follows straightforwardly from the open-system Lieb-Robinson bounds detailed in \cref{sec:open-LR}.
In particular, the three lines of \cref{eq:LR-bound-cc} follow from \cref{eq:LR-ZX-open}, \cref{eq:LR-Minh-constX}, and \cref{eq:chen-lucas-open-bound}, respectively,
while \cref{eq:LR-bound-cc-small-alpha} comes from \cref{eq:LR-ZX-open-small-alpha}.
In \cref{app:sim-local-obs}, we provide the details of the derivation of the bounds in \cref{lemma:LRbound_truncated}.
We now proceed to derive the bounds on clustering of correlations in the steady states of gapped, reversible Liouvillians.

\subsection{Bound on covariance correlations}
In this first section, we show how open-system Lieb-Robinson bounds constrain the correlations in the steady state of a Liouvillian $\L$ with dissipative gap $\lambda$.

The dissipative gap $\lambda >0$
is defined as the magnitude of the least-negative non-zero real part of an eigenvalue of $\L$.
(Throughout this work, we shall also assume that the Liouvillian is \textit{primitive}, i.e.~it has a unique steady state such that $\L$ has one eigenvalue of zero.)
In addition to the Lieb-Robinson bounds,  we will also appeal to certain ``mixing bounds'' which  describe how fast  arbitrary initial states (or various correlation functions) converge to the steady state.

For the mixing bounds, we need to impose ``reversibility'' on the Liouvillian. We say that a Liouvillian $\L$ is $s$-reversible if there exists some operator $\sigma$ such that $\Gamma_s \L^\dag = \L \Gamma_s$ is satisfied;  the superoperator $\Gamma_s$ is defined via $\Gamma_s(f) = ( \sigma^s f \sigma^{1-s} + \sigma^{1-s} f \sigma^s)/2$ and $s \in[0,1]$. For $s$-reversible Liouvillians, it is easy to see that $\sigma$ is the steady state of $\L$ (since $\L^\dag (\mathbb{I}) = 0$).
A sufficient condition for a Liouvillian to satisfy $s$-reversibility (for all $s$) is if the dissipators  $L_i$ satisfy a detailed-balance condition (and the Hamiltonian is zero, $H=0$), which is naturally obeyed for systems coupled to a thermal bath \cite{Kastoryano2013}. (More explicitly, the detailed-balance condition is satisfied if dissipators come in energy raising/lowering pairs with respect to some effective Hamiltonian $\bar{H}$---for example, if $[\bar{H}, L_{\pm}] = \pm \omega L_{\pm}$ and $|L_-| / |L_+| = \exp(2 \beta \omega)$ where $\beta^{-1}$ is an effective temperature.)

Returning to the topic of correlations, we let $\rho$ be a quantum state defined on the lattice $\Lambda$.
We are interested in the \emph{covariance correlation} between non-overlapping $X,Y \in \Lambda$:
\begin{align}
\label{eq:covcorrdef}
  T_\rho (X:Y) := \text{sup}_{\norm{f} = \norm{g} = 1}|\text{Tr}[(f \otimes g)(\rho_{XY} - \rho_X \otimes \rho_Y) ] |,
\end{align}
where $f$ and $g$ are Hermitian operators with $f$ supported on region $X$ and $g$ supported on region $Y$, and where $\rho_{X}$ is the reduced density matrix constructed from $\rho$ by tracing over the complement of $X$. Our goal is to bound this correlation function in terms of $\lam$ and the distance between $X$ and $Y$.

We follow Theorem 9 in Ref.~\cite{Kastoryano2013}.
Let $\sigma$ be the steady state of the Liouvillian.
From the right-hand side of \cref{eq:covcorrdef}, we define
\begin{equation}
\text{Cov}_\sigma(  f, g) \coloneqq \frac{1}{2} \text{Tr}[(f g + g f) \sigma] -\text{Tr}[f \sigma] \text{Tr}[g \sigma],
\end{equation}
which is equivalent to the term inside the sup (because $f$ and $g$ commute).
Now we use the bound (which follows directly from the triangle inequality)
\begin{equation}
\label{eq:covcorr}
  |\text{Cov}_\sigma(f, g)| \leq |\text{Cov}_\sigma (f_t, g_t)| + |\text{Cov}_\sigma(f, g)  - \text{Cov}_\sigma(f_t, g_t)|.
\end{equation}
Here $f_t$ and $g_t$ are the time-evolved versions of $f$ and $g$ under $\L^\dag$.
This step allows us to relate a \textit{static} covariance to  time-dependent quantities;  we will use dynamical bounds to  constrain the form of the latter, then pick an optimal time which maximally bounds the static covariance.

The first term on the right is constrained by the variance bound for $s$-reversible, primitive Liouvillians (see Appendix \ref{sec:var-bound})
\begin{equation} \label{eq:cov-bound}
|\text{Cov}_\sigma (f_t, g_t)| \leq 4 \norm{f} \norm{g}  e^{-2 \lambda t},
\end{equation}
where $\lambda$ is the dissipative gap of $\L$.
Intuitively, this relationship can be understood as follows: the operators $f_t, g_t$  both evolve (in time) toward an operator  that is proportional to the identity, so the covariance between them will eventually tend to zero as a function of time. The rate at which this occurs is set by the dissipative gap of the system.

To bound the second term, we use the relation $\Tr[\sigma f_t]=\Tr[\sigma f]$, which holds for all observables $f$.
This gives:
\begin{align}
& |\text{Cov}_\sigma(f, g)  - \text{Cov}_\sigma(f_t, g_t)| \\
& = \frac12 (|\Tr[(fg-f_tg_t)\sigma] + \Tr[(gf-g_tf_t)\sigma]|) \\
& = \frac12 (|\Tr[((fg)_t-f_tg_t)\sigma] + \Tr[((gf)_t-g_tf_t)\sigma]|) \\
& \leq \frac12 (\|(fg)_t-f_tg_t\|+\|(gf)_t -g_tf_t\|) \\
& \le K \norm{f}  \norm{g} \mathcal{C}(r, t), \label{eq:secondterm}
\end{align}
where $r \coloneqq d(X,Y)$.  We obtain the inequality in the final line using the open-system Lieb-Robinson bounds $\mathcal C(r,t)$ given in \cref{lemma:LRbound_truncated}.
Specifically, we use the following Lemma, which is itself a restatement of Corollary 7 in Ref.~\cite{Kastoryano2013}:
\begin{lemma}
[Time-evolution of spatially separated observables]
  \label{lemma:connectedcorrs}
    Take two operators $A$ and $B$ supported on $X,Y \in \Lam$  respectively such that $r\coloneqq d(X,Y)$, and let $A(t)=e^{\L^\dag t}A$ and $B(t) = e^{\L^\dag t}B$ be their time-evolution under the adjoint Liouvillian $\L^\dag$.
    We also define $(AB)(t) = e^{\L^\dag t}(AB)$.
    Then the following bound holds:
    \begin{align}
    \label{eq:ops-evolving-together}
        \|(AB)(t) - A(t)B(t)\| \le K\|A\|\|B\|\mathcal C(r,t),
    \end{align}
    where $\mathcal C(r,t)$ is given by \cref{lemma:LRbound_truncated} and $K$ is some constant that depends on lattice parameters.
\end{lemma}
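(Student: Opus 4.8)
The plan is to follow the argument of Corollary~7 in Ref.~\cite{Kastoryano2013}: approximate each of $A(t)$, $B(t)$, and $(AB)(t)$ by operators evolved under restrictions of $\mathcal{L}$ to small \emph{disjoint} neighborhoods of $X$ and $Y$, and exploit the fact that evolutions generated by Liouvillians with disjoint supports factorize exactly. It is worth noting up front that for closed (unitary) dynamics the left-hand side of \cref{eq:ops-evolving-together} vanishes identically, since conjugation by $e^{iHt}$ is an algebra homomorphism; the content of the lemma is precisely that the non-Leibniz (dissipative) part of $\mathcal{L}^\dagger$, which satisfies $\mathcal{L}^\dagger(PQ)-(\mathcal{L}^\dagger P)Q-P(\mathcal{L}^\dagger Q)=\sum_i[L_i^\dagger,P][Q,L_i]$, spoils this homomorphism property only by an amount controlled by $\mathcal{C}(r,t)$.

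Concretely, I would fix a radius $\rho<r/2$ so that the balls $B_X$, $B_Y$ of radius $\rho$ about $X$, $Y$ are disjoint, and let $\mathcal{L}_X$, $\mathcal{L}_Y$ be the restrictions of $\mathcal{L}$ to $B_X$, $B_Y$; since $B_X\cap B_Y=\emptyset$ these superoperators commute. With $\tilde A(t)=e^{\mathcal{L}_X^\dagger t}A$ and $\tilde B(t)=e^{\mathcal{L}_Y^\dagger t}B$, the key exact identity is
\begin{align}
  e^{(\mathcal{L}_X+\mathcal{L}_Y)^\dagger t}(AB)=\tilde A(t)\,\tilde B(t),
\end{align}
which holds because the two semigroups commute, each acts as the identity on the operator localized in the other region, and the Leibniz-correction term above vanishes on any product $P\otimes Q$ in which $Q$ is supported away from every jump operator of the relevant restricted Liouvillian.

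Next I would apply the triangle inequality,
\begin{align}
  \|(AB)(t)-A(t)B(t)\|\le \|(AB)(t)-\tilde A(t)\tilde B(t)\|+\|\tilde A(t)\tilde B(t)-A(t)B(t)\|.
\end{align}
For the second term, add and subtract $\tilde A(t)B(t)$ and use submultiplicativity together with the contractivity of adjoint Liouvillian dynamics---$\|e^{\mathcal{L}^\dagger t}O\|\le\|O\|$, valid because $e^{\mathcal{L}^\dagger t}$ is completely positive and unital, and likewise for $\mathcal{L}_X$, $\mathcal{L}_Y$---and \cref{lemma:LRbound_truncated} applied to $A$ and to $B$ with truncation radius $\rho$; this gives a bound of the form $2K\|A\|\|B\|\,\mathcal{C}(\rho,t)$. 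For the first term, the identity rewrites it as the Liouvillian-truncation error $\|(AB)(t)-e^{(\mathcal{L}_X+\mathcal{L}_Y)^\dagger t}(AB)\|$, which by Duhamel's formula and contractivity is at most $\int_0^t\|(\mathcal{L}^\dagger-(\mathcal{L}_X+\mathcal{L}_Y)^\dagger)\,e^{(\mathcal{L}_X+\mathcal{L}_Y)^\dagger s}(AB)\|\,ds$; the terms of $\mathcal{L}-(\mathcal{L}_X+\mathcal{L}_Y)$ either straddle the boundary of one of the two balls---so they act only on the part of the evolved operator that has spread a distance $\gtrsim\rho$ from $X$ or from $Y$, and are suppressed exactly as in \cref{lemma:LRbound_truncated}---or they connect $B_X$ directly to $B_Y$, contributing a factor $\|\mathcal{L}_Z\|_\infty\lesssim 1/r^\alpha$ with a summable prefactor for $\alpha>d$. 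This yields $\|(AB)(t)-\tilde A(t)\tilde B(t)\|\le K'\|A\|\|B\|\,\mathcal{C}(\rho,t)$. Finally, since each branch of $\mathcal{C}$ in \cref{lemma:LRbound_truncated} is a fixed inverse power of $r$ times a function of $t$, replacing $\rho\sim r/2$ by $r$ changes only the constant, which I absorb into $K$ (the handful of small $r$ not covered by this asymptotic reasoning being trivial after enlarging $K$).

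The main obstacle is the first-term estimate: \cref{lemma:LRbound_truncated} is phrased for the restriction to a single ball around a single site, whereas here the support $X\cup Y$ has two components and the natural truncation region is a union of two balls. Making the estimate rigorous amounts to re-running the iterative Lieb-Robinson argument underlying \cref{lemma:LRbound_truncated} for this two-component geometry; the reason it goes through is that leakage through each ball's boundary is governed by that ball's radius $\rho\sim r/2$, hence no worse than what \cref{lemma:LRbound_truncated} already controls, while the direct $B_X$--$B_Y$ couplings are power-law suppressed. As an alternative that bypasses the intermediate operator $\tilde A(t)\tilde B(t)$ entirely, one can use the exact Duhamel identity $(AB)(t)-A(t)B(t)=\int_0^t e^{(t-s)\mathcal{L}^\dagger}\!\big(\sum_i[L_i^\dagger,A(s)][B(s),L_i]\big)\,ds$---which makes transparent that the whole discrepancy stems from the dissipative part of $\mathcal{L}^\dagger$---and then bound the integrand by splitting the sum over $i$ according to whether the jump operator $L_i$ lies closer to $X$ or to $Y$ and applying the Lieb-Robinson commutator bound to the correspondingly distant factor.
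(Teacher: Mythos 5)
Your proposal is correct and follows essentially the same route as the paper's proof: truncate $\L$ to (disjoint) balls around $X$ and $Y$, use the exact factorization $e^{\tilde\L^\dag t}(AB)=\tilde A(t)\tilde B(t)$, apply the triangle inequality, bound the cross terms via submultiplicativity, contractivity, and \cref{lemma:LRbound_truncated}, and control the truncation error of $(AB)(t)$ by the same lemma applied to an operator initially supported on both regions. The two-component-support issue you flag is indeed glossed over in the paper with a parenthetical remark, and your Duhamel/dissipator-defect identity is a valid (and illuminating) alternative, but the core argument coincides with the paper's.
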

\cref{lemma:connectedcorrs} bounds the difference between operators that evolve together in the Heisenberg picture as opposed to evolving separately.
Again we emphasize that $A(t)$, $B(t)$, and $(AB)(t)$ are \textit{not} equivalent to Heisenberg-Langevin evolution, a fact that is at the core of this bound.
We defer the short proof of \cref{lemma:connectedcorrs} to \cref{app:connectedcorrs} and move on to proving the bound on the covariance correlations.

\begin{theorem}[Bounds on steady-state covariance correlations]
\label{theorem:covariancebound}
Consider Hermitian operators $f,g$ which are supported on two non-overlapping subsets $X$ and $Y$ of the $d$-dimensional cubic lattice $\Lambda$, and let $\L$ be an $s$-reversible Liouvillian with stationary state $\sigma$ and dissipative gap $\lambda$ that satisfies the conditions in \cref{eq:L-norm-bound}.
Then there exists a constant $c > 0$ which only depends on $\lambda, v$  such that
\begin{equation}
\label{eq:bounds-covar-correl}
T_\sigma (X:Y) \leq \begin{cases}
     \ds  c \left( r^{\alpha - d} \right)^{\frac{-2 \lambda}{ v + 2\lambda}},& \al > d,
    \\ \ds c \frac{\log(r)^{\al-d+1}}{r^{\al-3d}},& \al > 3d,
    \\ \ds c\frac{\log(r)^2}{r^{\al-3}},& \al> 3, d=1.
    \end{cases}
\end{equation}
\end{theorem}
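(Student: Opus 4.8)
The plan is to combine the two ingredients that the preceding text has carefully set up: the ``variance bound'' \cref{eq:cov-bound}, which controls $|\text{Cov}_\sigma(f_t,g_t)|$ by $4\norm f\norm g e^{-2\lambda t}$, and the Lieb-Robinson-type bound \cref{eq:secondterm} (via \cref{lemma:connectedcorrs} and \cref{lemma:LRbound_truncated}), which controls the discrepancy $|\text{Cov}_\sigma(f,g)-\text{Cov}_\sigma(f_t,g_t)|$ by $K\norm f\norm g\,\mathcal C(r,t)$. Inserting both into the triangle-inequality split \cref{eq:covcorr} and taking the supremum over normalized Hermitian $f,g$ gives, for every $t\ge 0$,
\begin{equation}
T_\sigma(X:Y)\le c'\left(e^{-2\lambda t}+\mathcal C(r,t)\right),
\end{equation}
with $c'=\max\{4,K\}$. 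The first term decays with $t$ while $\mathcal C(r,t)$ grows with $t$ (for fixed $r$), so the remaining task is purely an optimization over the free parameter $t$, carried out separately for each of the three regimes of $\alpha$ listed in \cref{lemma:LRbound_truncated}.

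First I would treat the power-law case $\alpha>d$, where $\mathcal C(r,t)\propto e^{vt}/r^{\alpha-d}$. Here one balances $e^{-2\lambda t}$ against $e^{vt}/r^{\alpha-d}$; setting the exponents equal, i.e.\ solving $-2\lambda t = vt-(\alpha-d)\log r$, gives $t_\star = \frac{(\alpha-d)\log r}{v+2\lambda}$, and substituting back yields $e^{-2\lambda t_\star} = (r^{\alpha-d})^{-2\lambda/(v+2\lambda)}$, which is exactly the first line of \cref{eq:bounds-covar-correl}; the constant $c$ absorbs $c'$ and the proportionality constant in $\mathcal C$, and depends only on $\lambda,v$ as claimed. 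For the two polynomial-light-cone regimes ($\alpha>3d$ with $\mathcal C\propto t^{\alpha-d+1}/r^{\alpha-3d}$, and $d=1,\ \alpha>3$ with $\mathcal C\propto t^2/r^{\alpha-3}$) the second term is only polynomially growing in $t$, so the dominant balance is different: one wants $t$ large enough that $e^{-2\lambda t}$ is negligible compared to the polynomial tail, and the natural choice is $t_\star\propto\log r$ (indeed $t_\star=\frac{\beta}{2\lambda}\log r$ for a suitable $\beta$), which makes $e^{-2\lambda t_\star}=r^{-\beta}$ decay faster than any fixed inverse power if $\beta$ is chosen large, while $\mathcal C(r,t_\star)$ picks up the advertised $\log(r)^{\alpha-d+1}/r^{\alpha-3d}$ or $\log(r)^2/r^{\alpha-3}$ behavior. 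One should check that $\beta$ can be taken so that the $e^{-2\lambda t}$ contribution is subleading relative to the polynomial term for all large $r$ — since the polynomial tail decays only polynomially in $r$, any $\beta$ exceeding the relevant exponent ($\alpha-3d$ or $\alpha-3$) suffices, and the resulting constant $c$ then depends on $\lambda$ (through $\beta$) and on the fixed lattice/Liouvillian constants.

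I do not expect a serious obstacle: the heavy lifting — the open-system Lieb-Robinson bounds and the reversibility-based variance bound — is already done in the earlier sections and appendices, and what remains is the elementary ``pick the optimal $t$'' argument familiar from Hastings-type clustering proofs. The one point requiring a little care is bookkeeping of the constants: verifying that in the first regime $c$ genuinely depends only on $\lambda$ and $v$ (and not on $r$, $|X|$, $|Y|$, or $\alpha$ in a divergent way), which requires tracking how the $|X|,|Y|$ factors and the $\alpha$-dependent prefactors in \cref{eq:LR-ZX-open} propagate through \cref{lemma:LRbound_truncated} and \cref{lemma:connectedcorrs}; here one uses that $f,g$ are supported on fixed finite regions so $|X|,|Y|$ are absorbed into $c$. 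A secondary subtlety is that the optimal $t_\star\propto\log r$ must be consistent with any implicit assumption (e.g.\ $r$ large enough that $t_\star\ge 0$ and that the asymptotic forms of $\mathcal C$ in \cref{eq:LR-bound-cc} are valid), but this is handled by restricting to sufficiently large $r$ and enlarging $c$ to cover the finitely many small-$r$ cases.
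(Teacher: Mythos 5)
Your proposal is correct and follows essentially the same route as the paper: the triangle-inequality split of the covariance, the reversibility-based decay bound $4\norm f\norm g e^{-2\lambda t}$ for the first term, \cref{lemma:connectedcorrs} with $\mathcal C(r,t)$ for the second, and then optimization over $t$ — exponent balancing for $\alpha>d$ (equivalent to the paper's exact minimization) and a $t_\star\propto\log r$ ansatz for the two polynomial-light-cone regimes (the paper fixes $\beta$ so the exponential term matches the power-law tail, while you take $\beta$ larger so it is subleading; either choice yields the stated scalings).
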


\begin{proof}
From our previous analysis [see Eqs.~(\ref{eq:cov-bound}) and (\ref{eq:secondterm})] on the covariance correlation in \cref{eq:covcorr}, we have
\begin{equation}
|\text{Cov}_\sigma (f, g)| \leq 4 \norm{f} \norm{g}  \left(e^{-2 \lambda t} + \frac K4 \mathcal{C}(r, t) \right).
\end{equation}
To obtain the tightest bound, we  minimize with respect to $t$ the function
\begin{equation}
\label{eq:cov_corr_minimization}
h(t) =   e^{-\lambda' t} + K' \mathcal{C}(r, t),
\end{equation}
where $\lambda' = 2\lambda, K'= K/4$.

We will perform this minimization exactly for the first case in \cref{eq:bounds-covar-correl}, for which $\mathcal{C}(r,t)$ is given by the first line of \cref{eq:LR-bound-cc}; for the other cases, we instead use an approximation to the optimal ansatz, which allows us to obtain an analytical expression for the bound.
Setting $dh/dt=0$ in \cref{eq:cov_corr_minimization} leads to a minimum at time
\begin{equation}
\bar{t} = -\left( \frac{1}{\lambda'+v} \right) \log \left(  \frac{ K' v }{ \lambda r^{\alpha-d}} \right).
\end{equation}
This implies a minimum:
\begin{align}
h(\bar{t}) &= \left( \frac{K' v}{\lambda' r^{\alpha - d}} \right)^{\frac{\lambda'}{ \lambda' + v}} + \frac{K'}{r^{\alpha - d}}  \left( \frac{K' v} {\lambda' r^{\alpha - d}} \right)^{\frac{-v }{ \lambda' + v }} \nonumber \\
&\leq  c  \left( r^{\alpha - d} \right)^{\frac{-2 \lambda}{ v + 2\lambda}}
\end{align}
for some constant $c$ which depends on $\lambda, v, K$.
Taking the supremum over $f,g$ gives the bound on $T_\sigma (X:Y) $ for $\alpha > d$ in the first line of Eq.~(\ref{eq:bounds-covar-correl}).

For the other two cases, we use the ansatz $t^* = 1+\log(r^\beta)$.
Since the bound in the second line of \cref{eq:LR-bound-cc} scales as
$ \mathcal C(r,t) \propto t^{\alpha-d+1}/r^{\alpha-3d}$
for all $t$, we have
\begin{align}
  h(t^*) &= e^{-\lambda(1+\log(r^\beta))} + K \frac{(1+\log(r^\beta))^{\al-d+1}}{r^{\al-3d}}\nonumber \\
  &= \frac{e^{-\lambda}}{ r^{\lambda\beta}}+ K \frac{(\beta\log(r))^{\al-d+1}}{r^{\al-3d}}
  + \O{\frac{\log^{\al-d}(r)}{r^{\al-3d}}}.
\end{align}
We choose $\beta = (\al-3d)/\lambda$, which is positive for $\al>3d$.
This gives the ultimate bound of
\begin{align}
  h(t^*) &=  \frac{e^{-\lambda}+ K  \left( \frac{\alpha -3d}{\lambda} \log r \right)^{\alpha-d+1}}{r^{\alpha - 3d}} + \O{\frac{\log^{\al-d}(r)}{r^{\al-3d}}}\nonumber\\
  &=
   K\left(\frac{\al-3d}{\lambda}\right)^{\al-d+1}\frac{\log^{\al-d+1}(r)}{r^{\al-3d}} \nonumber\\&\quad \qquad + \O{\frac{\log^{\al-d}(r)}{r^{\al-3d}}},
\end{align}
which proves the second line of Eq.~(\ref{eq:bounds-covar-correl}). For the $d=1$ case in the last line of Eq.~(\ref{eq:bounds-covar-correl}), the argument proceeds similarly, but we obtain a slightly better scaling in the logarithmic factor.
\end{proof}

Here we discuss the scaling of the bounds in \cref{eq:bounds-covar-correl}, which is depicted in \cref{fig:light-cone-scalings}.
The effective exponent of the $1/r$-scaling of the bound for $\al > d$ is $ \al' \equiv (\alpha - d)\frac{2\lambda}{ v + 2\lambda}$, as compared to $\tilde \al \equiv \al-3$ for $\al > 3d$ (neglecting terms doubly logarithmic in $r$).
Since $\al'$ decreases as a function of $v$, the former bound becomes looser for larger $v$.
In more detail, if we let $x = \frac v\lambda$, then $\al' < \tilde \al$ for all $\al > \frac{(3x+4)d}{x}$.
In the limit of $x\ra \infty$, $\tilde \al$ is tighter for all $\al > 3d$.
Thus, for large enough $\al$ and $v$, the power-law light-cone bounds [second line in Eq.~(\ref{eq:LR-bound-cc}), which in turn comes from Eq.~(\ref{eq:LR-Minh-constX})]  give asymptotically tighter bounds on the clustering of covariance correlations than the logarithmic light-cone bound [first line in Eq.~(\ref{eq:LR-bound-cc}), which in turn comes from Eq.~(\ref{eq:LR-ZX-open})].

\begin{figure}[h]
\label{fig:model1}
\includegraphics[width=.45\textwidth]{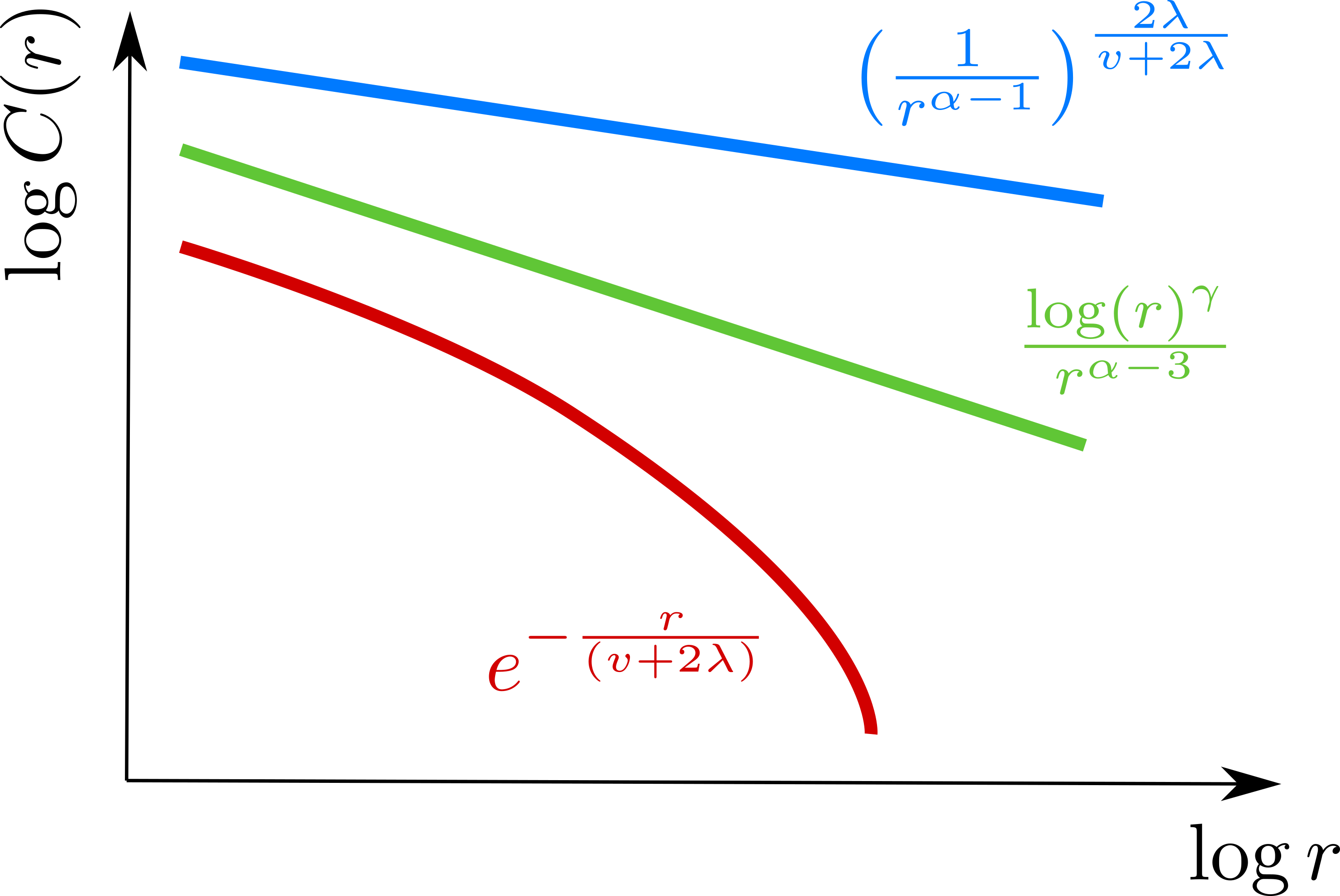}
\caption{A log-log plot of the tails of the bounds on the various connected correlation functions in Theorems 1, 2, and 3 for $d = 1$. We include the exponentially decaying tail from the short-range interactions case (red curve) for comparison.
For the power-law decaying bounds, we have different scaling exponents of the power-law tails for the bound for $\alpha >1$ (blue curve) and the bound for $\alpha > 3$ (green curve).
For a given choice of $x = \frac v\lambda$, the relative positioning of the curves
holds for all $\al > \frac{3x+4}{x}$.
In the limit $v \gg \lambda$, the picture holds for all $\al > 3$.}
  \label{fig:light-cone-scalings}
\end{figure}

\subsection{Stability result and mutual information bound}
In this section, we will  use the aforementioned bounds to constrain steady-state properties of open systems with power-law interactions.
In addition to the newly-derived Lieb-Robinson bounds,  we will  appeal to a ``mixing bound'' which provides an upper bound to how fast an arbitrary initial state will converge to the steady state. The following mixing bound was derived in Ref.~\cite{Kastoryano2013d}, and generalizes the mixing bound of classical Markov chains to quantum semigroups:

\begin{lemma}
  \label{cor:mixing}
    Consider a  primitive Liouvillian $\L$ that has a  full-rank steady state  $\sigma$, and is $\frac12$-reversible. Then an arbitrary initial state $\rho$ will converge to $\sigma$ at a rate bounded by
    \begin{equation}
        \lVert \rho(t) - \sigma \lVert_1 \leq \sqrt{2 \log( \lVert \sigma^{-1} \lVert )} e^{-\beta t},
    \end{equation}
    where $\beta$ is called the log-Sobolev constant associated with $\L$.
\end{lemma}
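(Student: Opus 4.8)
The plan is to bound the trace-norm distance to the steady state by interpolating through the Umegaki relative entropy $D(\rho(t)\,\|\,\sigma)\coloneqq\Tr[\rho(t)\log\rho(t)]-\Tr[\rho(t)\log\sigma]$: first show that this quantity decays exponentially at a rate fixed by the log-Sobolev constant, then convert that statement into a trace-norm bound via Pinsker's inequality, using only a crude bound on the \emph{initial} relative entropy in terms of $\norm{\sigma^{-1}}$. All of the heavy machinery is the one developed in Ref.~\cite{Kastoryano2013d}, which I would invoke rather than rederive.

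\emph{Step 1 (entropy decay).} Recall that for a primitive, $\tfrac12$-reversible Liouvillian the log-Sobolev constant $\beta$ is defined (following Ref.~\cite{Kastoryano2013d}) as the largest constant such that $2\beta\,D(\rho\,\|\,\sigma)\le\mathcal E(\rho)$ for all states $\rho$, where $\mathcal E$ is the Dirichlet form associated with the (symmetric part of the) generator. The $\tfrac12$-reversibility assumption is precisely what guarantees that the rate of entropy production along the semigroup coincides with this Dirichlet form, so that
\[
  \frac{d}{dt}\,D(\rho(t)\,\|\,\sigma)=-\,\mathcal E(\rho(t))\;\le\;-\,2\beta\,D(\rho(t)\,\|\,\sigma).
\]
Gr\"onwall's inequality then yields $D(\rho(t)\,\|\,\sigma)\le e^{-2\beta t}\,D(\rho(0)\,\|\,\sigma)$.

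\emph{Step 2 (initial condition and Pinsker).} Since $\Tr[\rho\log\rho]\le 0$ and, by full-rankness of $\sigma$, $-\log\sigma\le\log\norm{\sigma^{-1}}\,\mathbb{I}$ as operators (the smallest eigenvalue of $\sigma$ is $\norm{\sigma^{-1}}^{-1}$), the initial relative entropy obeys $D(\rho\,\|\,\sigma)\le\log\norm{\sigma^{-1}}$. Chaining this with Step 1 and Pinsker's inequality $\norm{\rho-\sigma}_1\le\sqrt{2\,D(\rho\,\|\,\sigma)}$ gives
\[
  \norm{\rho(t)-\sigma}_1\;\le\;\sqrt{2\,D(\rho(t)\,\|\,\sigma)}\;\le\;\sqrt{2\,e^{-2\beta t}\,\log\norm{\sigma^{-1}}}\;=\;\sqrt{2\log\norm{\sigma^{-1}}}\;e^{-\beta t},
\]
which is exactly the claimed bound.

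The substantive content — and the main obstacle — is Step 1: showing that the log-Sobolev inequality implies exponential decay of the relative entropy along the dynamics. Two points need care. First, the entropy-production identity $\tfrac{d}{dt}D(\rho(t)\|\sigma)=-\mathcal E(\rho(t))$ relies on $\tfrac12$-reversibility so that $\mathcal E$ is a genuine nonnegative, symmetric Dirichlet form possessing a well-defined positive log-Sobolev constant; for non-reversible Liouvillians the entropy production need not take this form and the argument breaks. Second, finiteness of all the entropic quantities requires the full-rank hypothesis on $\sigma$, so that $\log\sigma$ is bounded and $D(\rho\|\sigma)<\infty$ for every state $\rho$ (this is also what makes $\norm{\sigma^{-1}}$ finite). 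Since the log-Sobolev framework and the entropy-decay estimate are established in full in Ref.~\cite{Kastoryano2013d}, in the write-up I would quote those results directly and present only the short chain of inequalities in Step 2 explicitly.
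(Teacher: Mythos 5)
The paper never proves this lemma itself---it is imported verbatim from Ref.~\cite{Kastoryano2013d} (``The following mixing bound was derived in Ref.~\cite{Kastoryano2013d}\dots'')---so the relevant comparison is with the standard proof behind that citation, and your argument is exactly that proof: log-Sobolev inequality gives exponential decay of the relative entropy $D(\rho(t)\|\sigma)$, Pinsker converts this to trace norm, and the initial entropy is bounded by $D(\rho\|\sigma)\le \log\norm{\sigma^{-1}}$ using $-\log\sigma\le \log\norm{\sigma^{-1}}\,\mathbb{I}$. Step 2 is airtight. The one place to tighten the phrasing is Step 1: in the quantum setting the entropy-production functional and the $L^2(\sigma)$ Dirichlet form are distinct objects, so the identity $\tfrac{d}{dt}D(\rho(t)\|\sigma)=-\mathcal{E}(\rho(t))$ is not literally correct; the constant that directly governs entropy decay is the LS-1 constant (defined through the entropy production), and it is related to the Dirichlet-form (LS-2) constant only via $\alpha_2\le\alpha_1$ under the reversibility/regularity hypotheses established in Ref.~\cite{Kastoryano2013d}. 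Since you are quoting that reference for the heavy machinery anyway, either define $\beta$ as the entropy-production constant or cite the inequality $\alpha_2\le\alpha_1$ explicitly; with that adjustment the chain $\norm{\rho(t)-\sigma}_1\le\sqrt{2D(\rho(t)\|\sigma)}\le\sqrt{2\log\norm{\sigma^{-1}}}\,e^{-\beta t}$ delivers precisely the stated bound.
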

Intuition can be gained by considering an  ``infinite-temperature'' steady state $\sigma = \mathbb{I} /
d_{H}$ where $ d_{H}$ is the dimension of the Hilbert space.  The mixing bound above states that the coefficient in front of the exponential will  scale as $\sqrt{\log(d_{H})}$, i.e.~it will increase with the dimension of the Hilbert space. This is because the convergence toward the unique steady state from an arbitrary initial state can be slow if the dimension of the Hilbert space is large.

\begin{theorem}[Effect of perturbations on reduced steady-state density matrix]
\label{thm:stability_results}
Let $X,Y$ be two non-overlapping subsets of a $d$-dimensional cubic lattice $\Lambda$.  Let $\L_1 $  be a primitive and $\frac12$-reversible Liouvillian with log-Sobolev constant $\beta$, and let $\L_2$ be a Liouvillian perturbation, acting trivially outside of $X$. Let $\rho$ be the stationary state of $\L_1$, and let $\sigma$ be the stationary state of $\L_1  + \L_2$. Then,
\begin{equation}
\label{eq:perturbation_bounds}
\lVert \rho_Y - \sigma_Y  \lVert_1 \leq
\begin{cases}
    \ds c \log (  \lVert \rho^{-1}  \lVert)^{\frac12} \left(\frac1{r^{\alpha - d}} \right)^{\frac{2 \beta}{ v + 2\beta}},& \al > d,
    \\ \ds c \log (  \lVert \rho^{-1}  \lVert)^{\frac12} \frac{\log(r)^{\al-d+1}}{r^{\al-3d}},& \al > 3d,
    \\ \ds c \log (  \lVert \rho^{-1}  \lVert)^{\frac12} \frac{\log(r)^2}{r^{\al-3}},& \hspace{-2em}\al> 3, d=1,
\end{cases}
\end{equation}
where $c$ is a constant and $r$ is the distance between $X$ and $Y$.
\end{theorem}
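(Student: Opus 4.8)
The plan is to mirror the proof of \cref{theorem:covariancebound}, with the variance bound \cref{eq:cov-bound} replaced by the mixing bound of \cref{cor:mixing} and \cref{lemma:connectedcorrs} replaced by the ``perturbation form'' of \cref{lemma:LRbound_truncated}. I would begin from the dual characterization $\norm{\rho_Y-\sigma_Y}_1 = \sup_{O_Y}\abs{\Tr[O_Y(\rho-\sigma)]}$, where the supremum is over Hermitian operators $O_Y$ supported on $Y$ with $\norm{O_Y}\le 1$ (extended trivially outside $Y$), and introduce a free time parameter $t\ge0$ to be fixed at the end.

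The key identity comes from the stationarity of $\sigma$ under $\L_1+\L_2$, which gives $\L_1\sigma=-\L_2\sigma$; combined with $e^{\L_1 t}\rho=\rho$, a Duhamel expansion about $\L_1$ yields
\begin{equation}
  \rho-\sigma = \bigl(\rho-e^{\L_1 t}\sigma\bigr) - \int_0^t ds\; e^{\L_1 s}\,\L_2\sigma .
\end{equation}
I would bound the two pieces separately. The first is a pure relaxation term: $\abs{\Tr[O_Y(\rho-e^{\L_1 t}\sigma)]}\le\norm{\rho-e^{\L_1 t}\sigma}_1$, which by \cref{cor:mixing}---using that $\L_1$ is primitive, $\frac12$-reversible, and has a full-rank steady state $\rho$---is at most $\sqrt{2\log\norm{\rho^{-1}}}\,e^{-\beta t}$. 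For the second piece, I would transfer the evolution onto the observable, $\Tr[O_Y e^{\L_1 s}\L_2\sigma]=\Tr[(\L_2^\dag e^{\L_1^\dag s}O_Y)\sigma]$, and use locality of the perturbation: $\L_2^\dag$ annihilates any operator supported away from $X$, since its Hamiltonian and jump operators all act within $X$. Replacing $e^{\L_1^\dag s}O_Y$ by its restriction to a ball around $Y$ that is disjoint from $X$---at a cost $K\norm{O_Y}\mathcal C(r,s)$ by \cref{lemma:LRbound_truncated}---therefore kills the leading term and leaves $\abs{\Tr[(\L_2^\dag e^{\L_1^\dag s}O_Y)\sigma]}\le\norm{\L_2^\dag}_\infty K\norm{O_Y}\mathcal C(r,s)$. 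Assembling the two bounds gives $\norm{\rho_Y-\sigma_Y}_1\le\sqrt{2\log\norm{\rho^{-1}}}\,e^{-\beta t}+K\norm{\L_2^\dag}_\infty\int_0^t ds\,\mathcal C(r,s)$; note that the $r$-dependence is insensitive to the strength of $\L_2$, which enters only through the constant.

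Finally I would optimize the right-hand side over $t$, inserting the three forms of $\mathcal C(r,s)$ from \cref{eq:LR-bound-cc}. For $\alpha>d$ the integral scales as $e^{vt}/r^{\alpha-d}$, and balancing it against the exponential reproduces the first line of \cref{eq:perturbation_bounds} by exactly the computation in the first case of \cref{theorem:covariancebound}, now carrying the additional prefactor $\log(\norm{\rho^{-1}})^{1/2}$; for $\alpha>3d$ and for $d=1,\,\alpha>3$ the ansatz $t^\ast=1+\log(r^\gamma)$ with $\gamma$ chosen so that the two contributions scale identically reproduces the remaining two lines. I expect the only genuine obstacle to be the Lieb--Robinson step: one has to argue carefully that, up to the quantitative error furnished by \cref{lemma:LRbound_truncated}, the time-evolved observable $e^{\L_1^\dag s}O_Y$ can be replaced by an operator strictly supported away from the perturbation region $X$---so that the whole influence of $\L_2$ on the distant region $Y$ is accounted for by that single error term---and to check that the truncation ball can be chosen disjoint from $X$ uniformly for $s\in[0,t]$. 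Granting \cref{lemma:LRbound_truncated,cor:mixing}, the remaining ingredients (the Duhamel identity and the single-variable minimization) are routine.
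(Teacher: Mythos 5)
Your proposal is essentially correct but follows a genuinely different route from the paper's. The paper interpolates $\rho-\sigma$ through an arbitrary state $\phi$ evolved under both $e^{\L_1 t}$ and $e^{(\L_1+\L_2)t}$, applies the log-Sobolev mixing bound of \cref{cor:mixing} twice, and controls the remaining pieces by bounding the difference of Heisenberg evolutions $(e^{(\L_1^\dag+\L_2^\dag)t}-e^{\L_1^\dag t})(A_Y)$ via two applications of the truncation bound of \cref{lemma:LRbound_truncated} (both evolutions compared to the one generated by terms not touching $X$). Your Duhamel identity built on $\L_1\sigma=-\L_2\sigma$ avoids the interpolating state and invokes the mixing bound only once, and your key locality observation---that $\L_2^\dag$ annihilates any operator supported disjointly from $X$ (the Hamiltonian commutator vanishes and the dissipator terms cancel)---is correct and plays exactly the role that the vanishing of $K_Y$ on distant operators plays elsewhere in the paper. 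After that, the structure ``mixing term plus Lieb-Robinson term, optimized over $t$'' is the same as in \cref{theorem:covariancebound}, and the $\alpha>d$ case goes through verbatim.

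One quantitative wrinkle: as written, your second piece carries $\int_0^t \mathcal C(r,s)\,ds$ rather than $\mathcal C(r,t)$. For $\alpha>d$ this only changes a constant, but in the two polynomial-light-cone cases $\mathcal C(r,s)\propto s^{\alpha-d+1}/r^{\alpha-3d}$ (respectively $s^{2}/r^{\alpha-3}$), so the integral produces an extra power of $t$; at the optimal $t^*\sim\log r$ you would obtain $\log(r)^{\alpha-d+2}/r^{\alpha-3d}$ and $\log(r)^{3}/r^{\alpha-3}$, one logarithm weaker than \cref{eq:perturbation_bounds}. To recover the stated powers, resum the Duhamel term, $\int_0^t ds\, e^{\L_1 s}\L_2\sigma=e^{\L_1 t}\sigma-\sigma$, and use stationarity of $\sigma$ under the perturbed semigroup to write $\tr[O_Y(e^{\L_1 t}\sigma-\sigma)]=\tr[((e^{\L_1^\dag t}-e^{(\L_1^\dag+\L_2^\dag)t})O_Y)\sigma]$, which is bounded by $2K\|O_Y\|\,\mathcal C(r,t)$ upon approximating both Heisenberg evolutions by the one with all terms touching $X$ removed (since $\L_2$ acts only within $X$, truncation to a ball around $Y$ of radius less than $r$ removes it entirely); this is in effect the paper's step. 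Two minor points: \cref{cor:mixing} requires $\rho$ to be full rank (implicit in the paper as well, since $\log\|\rho^{-1}\|$ appears), and your constant $c$ picks up a factor of $\|\L_2\|_\infty$, which is harmless since $c$ is unconstrained in the statement.
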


The theorem basically says that the effects of local perturbations in the Liouvillian will not be felt significantly by the steady state of the system at sufficiently distant locations.
We prove the theorem by first introducing a time-evolved state to interpolate between the two steady states.
This allows us to use a combination of mixing-time and Lieb-Robinson bounds to restrict the terms in this bound.
Then we apply the same minimization procedure used in \cref{theorem:covariancebound} for the covariance-correlations bound to arrive at the stated bounds in \cref{eq:perturbation_bounds} [each of which follow directly from the three cases in \cref{eq:LR-bound-cc}].
We defer the proof of this result, which is similar to the proof of \cref{theorem:covariancebound}, to \cref{app:proof-stability}.

We now prove a bound on the mutual information in the steady state.  The mutual information between two regions $A,B$ is defined as
\begin{equation}
I_\rho(A:B) = S(\rho_{AB} || \rho_A \otimes \rho_B),
\end{equation}
where $S(\rho || \sigma )= \tr[\rho ( \log \rho - \log \sigma)]$ is the relative entropy. The following theorem holds.

\begin{theorem} [Clustering of mutual information]
\label{thm:mutual_information_clustering}
Let $A,B$ be two non-overlapping subsets of a $d$-dimensional cubic lattice $\Lambda$.  Let $\L $  be a primitive and $\frac12$-reversible Liouvillian with log-Sobolev constant $\beta$. Let $\rho$ be the stationary state of $\L$. Then the mutual information between the two regions $I_\rho(A:B)$ is bounded by
\begin{equation}
I_\rho(A:B)  \leq \begin{cases}
       \ds c \log (  \lVert \rho^{-1}  \lVert)^{\frac32} \left( \frac1{r^{\alpha - d}} \right)^{\frac{2 \beta}{ v + 2\beta}},& \al > d,
    \\ \ds  c \log (  \lVert \rho^{-1}  \lVert)^{\frac32} \frac{\log(r)^{\al-d+1}}{r^{\al-3d}},& \al > 3d,
    \\ \ds c\log (  \lVert \rho^{-1}  \lVert)^{\frac32} \frac{\log(r)^2}{r^{\al-3}},& \hspace{-2em}\al> 3, d=1,
    \end{cases}
\end{equation}
where $c$ is a constant and $r$ is the distance between $A,B$.
\end{theorem}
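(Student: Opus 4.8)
The plan is to bound the mutual information $I_\rho(A:B)$ in terms of the covariance correlation $T_\rho(A:B)$ (or a suitable trace-distance version of it), and then invoke the clustering bounds already established in \cref{theorem:covariancebound} and \cref{thm:stability_results}. The key structural fact is that relative entropy $S(\rho_{AB}\|\rho_A\otimes\rho_B)$ can be controlled by a trace-norm distance via a ``continuity of relative entropy'' estimate, at the cost of a $\log(\lVert\rho^{-1}\rVert)$-type factor coming from the smallest eigenvalue of the full-rank stationary state. First I would recall the inequality $I_\rho(A:B) = S(\rho_{AB}\|\rho_A\otimes\rho_B) \le c\,\lVert \rho_{AB}-\rho_A\otimes\rho_B\rVert_1\,\log(\lVert\rho^{-1}\rVert) + (\text{small correction})$, using an Ordered/Fannes-type bound or a direct computation $S(\rho\|\sigma)\le \lVert\rho-\sigma\rVert_1\,\lVert\log\sigma\rVert$ valid when $\sigma$ is full rank, together with $\lVert\log\sigma\rVert \lesssim \log(\lVert\sigma^{-1}\rVert)$ for normalized states. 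This converts the problem to bounding $\lVert\rho_{AB}-\rho_A\otimes\rho_B\rVert_1$.

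Second, I would bound $\lVert \rho_{AB}-\rho_A\otimes\rho_B\rVert_1$ itself by appealing to the stability result \cref{thm:stability_results}: think of $\rho_A\otimes\rho_B$ as (close to) the stationary state of a Liouvillian obtained from $\L$ by ``cutting'' the interaction between a neighborhood of $A$ and the rest, i.e.~a local Liouvillian perturbation $\L_2$ supported near $A$ whose stationary state factorizes across the $A$/$B$ cut. Alternatively, and more directly, one reduces $\lVert\rho_{AB}-\rho_A\otimes\rho_B\rVert_1$ to the covariance correlation $T_\rho(A:B)$ up to a dimensional factor $d_A d_B$ (by expanding the operator in a Hermitian operator basis on $X=A$, $Y=B$ and applying \cref{theorem:covariancebound} termwise), though this introduces volume factors one would want to absorb into constants for fixed region sizes. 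Either route yields $\lVert\rho_{AB}-\rho_A\otimes\rho_B\rVert_1 \le c\,\log(\lVert\rho^{-1}\rVert)^{1/2}\,\mathcal{F}(r)$, where $\mathcal{F}(r)$ is one of the three decay profiles $\bigl(r^{-(\alpha-d)}\bigr)^{2\beta/(v+2\beta)}$, $\log(r)^{\alpha-d+1}/r^{\alpha-3d}$, or $\log(r)^2/r^{\alpha-3}$ appearing in \cref{thm:stability_results}; the $\log(\lVert\rho^{-1}\rVert)^{1/2}$ factor is inherited from \cref{cor:mixing}.

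Combining the two steps multiplies the $\log(\lVert\rho^{-1}\rVert)^{1/2}$ from the trace-distance bound by the additional $\log(\lVert\rho^{-1}\rVert)$ from the relative-entropy continuity estimate, producing the $\log(\lVert\rho^{-1}\rVert)^{3/2}$ prefactor that appears in the statement, with the same three spatial-decay cases carried over verbatim from \cref{eq:LR-bound-cc}. The main obstacle I anticipate is the relative-entropy continuity step: the naive bound $S(\rho\|\sigma)\le\lVert\rho-\sigma\rVert_1\lVert\log\sigma\rVert$ is not quite Lipschitz near the boundary of state space, so one must either use a sharpened Audenaert--Eisert-type continuity bound or exploit full-rankness of $\rho$ to keep $\lVert\log\rho_A\otimes\rho_B\rVert$ finite and bounded by $\log(\lVert\rho^{-1}\rVert)$ (since reduced states of a full-rank global state are full-rank with inverse norm no larger than that of the global state). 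A secondary subtlety is making sure the $\rho_A\otimes\rho_B$ appearing in the mutual information really is controlled by \cref{thm:stability_results} rather than only $\sigma_Y$ for a single region; this is handled by applying the stability bound with $\L_2$ chosen to decouple $A$ from its complement and noting that the resulting stationary state's $Y=B$-marginal equals $\rho_B$ while its $A$-marginal is unaffected at leading order, so the product structure is reproduced up to an error of the same order.
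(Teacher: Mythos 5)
Your overall architecture matches the paper's proof: bound the mutual information by a relative entropy with respect to a product reference state, convert that to a trace distance at the cost of a $\log(\lVert\rho^{-1}\rVert)$ factor (the full-rank continuity bound you gesture at, e.g.\ via $S(\rho\Vert\sigma)\le\tr[(\rho-\sigma)(\log\rho-\log\sigma)]$, does resolve the Lipschitz worry), control the trace distance by \cref{thm:stability_results}, and multiply the two logarithmic factors to get the $3/2$ power. The genuine gap is in how you propose to control the reference state $\rho_A\otimes\rho_B$. Your construction---choose $\L_2$ to decouple $A$ from its complement and argue that the cut generator's stationary state has $A$-marginal ``unaffected at leading order'' and $B$-marginal equal to $\rho_B$---fails: the removed terms sit at the boundary of $A$, so in \cref{thm:stability_results} the distance between the perturbation's support $X$ and $Y=A$ is $\O{1}$ and the theorem gives no decay in $r$ for the $A$-marginal comparison; moreover the $B$-marginal of the cut stationary state is not exactly $\rho_B$ but only close to it via yet another stability estimate. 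Your alternative route---expanding $\rho_{AB}-\rho_A\otimes\rho_B$ in an operator basis and applying \cref{theorem:covariancebound} termwise---carries prefactors exponential in $|A|+|B|$ and, in the $\al>d$ case, would produce an exponent governed by the dissipative gap $\lambda$ rather than the log-Sobolev constant $\beta$ appearing in the statement.

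The fix, which is the paper's route, is to avoid $\rho_A\otimes\rho_B$ altogether: since $S(\rho_{AB}\Vert\tau_A\otimes\tau_B)=I_\rho(A:B)+S(\rho_A\Vert\tau_A)+S(\rho_B\Vert\tau_B)\ge I_\rho(A:B)$ for \emph{any} product state $\tau_A\otimes\tau_B$, one chooses $\tau_A\otimes\tau_B=\sigma_A\otimes\sigma_B$ where $\sigma$ is the stationary state of $\tilde\L$, the restriction of $\L$ to the disjoint balls of radius $r/2$ around $A$ and $B$; this reference state factorizes across the $A$/$B$ cut by construction. Then $S(\rho_{AB}\Vert\sigma_A\otimes\sigma_B)\le\log(\lVert\rho_{AB}^{-1}\rVert)\,\lVert\rho_{AB}-\sigma_A\otimes\sigma_B\rVert_1$, and a \emph{single} application of \cref{thm:stability_results} with $\L_1=\tilde\L$, $\L_2=\L-\tilde\L$, $Y=A\cup B$ (so the perturbation is at distance $\sim r/2$ from $Y$) completes the argument. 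If you insist on keeping $\rho_A\otimes\rho_B$, the repair is to cut around \emph{both} regions at distance $\sim r/2$ and apply \cref{thm:stability_results} three times ($Y=A$, $Y=B$, $Y=A\cup B$) together with a triangle inequality; a cut adjacent to $A$ cannot work.
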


The significance of this result is that the mutual-information correlations in the steady state of an open long-range system decay as a power-law in the distance between regions.
This bound, which relies on the existence of the log-Sobolev constant, is tighter than the naive bound that would result from simply applying the bound on the covariance correlation in \cref{theorem:covariancebound} to $I_\rho(A:B)$.

\begin{proof}

We define the semi-group $\tilde{\L}$ to be the terms in $\L$ that act entirely within balls of radius $r/2$ centered around $A$ and $B$, and let $\sigma$ be the steady state of  $\tilde{\L}$. Simple manipulations imply:
\begin{align}
I_\rho(A:B) &= - S(\rho_{AB})  + S(\rho_A)  + S(\rho_B) \\
& \leq  - S(\rho_{AB})  - \tr[ \rho_A \log \sigma_A ] -  \tr[ \rho_B \log \sigma_B ] \\
&=  S(\rho_{AB} || \sigma_A \otimes \sigma_B).
\end{align}
where we have used $S(\rho || \sigma) \geq 0$ to obtain the inequality.
The RHS further  satisfies the inequality:
\begin{equation}
S(\rho_{AB} || \sigma_A \otimes \sigma_B) \leq \log( ||\rho_{AB}^{-1} ||) || \rho_{AB} - \sigma_A \otimes \sigma_B ||_1,
\end{equation}
which is a standard result (c.f. Eq.~(36) in Ref.~\cite{Kastoryano2013}).
From here, we can apply the bounds in \cref{thm:stability_results}, using $\L_1 = \td \L $, $\L_2 = \L - \td \L$, $Y = A\cup B$, and $X = \Lambda \setminus Y$.
\end{proof}

\section{Summary and outlook}

In this work, we have proven generalized Lieb-Robinson bounds which constrain the dynamics of open, Markovian systems with power-law interactions and used them to constrain correlations in the steady state.

We comment briefly on the tightness of the bounds derived in this work. Intuitively, one might expect that the presence of dissipation should lead to tighter Lieb-Robinson bounds for open systems than for their closed counterparts, since the presence of decoherence from a bath might  limit the speed of quantum information transfer.
In this work, we have  generalized the proof of Lieb-Robinson bounds from closed system dynamics to Markovian evolution (\textit{a priori}, such bounds need not exist for Markovian dynamics).
However, our bounds only depend on interaction range and the dimension of the lattice.
Any bound that only depends on these two inputs cannot be tighter than the corresponding closed-system Lieb-Robinson bound, since the latter is a special case of former.
As such, the saturating protocols for closed systems \cite{Tran2020hierarchylinearlightcones,Tran2021} can be used to saturate open Lieb-Robinson bounds such as those uncovered in  \cref{lemma:LRbound_truncated}.
In the future, it would be interesting to add another degree of freedom into formulations of open Lieb-Robinson bounds: the dissipative gap. (Some progress has been made in showing that Lieb-Robinson velocities can get tighter in dissipative systems \cite{descamps2013}.)
In principle, it might be possible to derive Lieb-Robinson bounds that reduce to closed-system ones when the dissipative gap is zero, and get tighter in the presence of non-zero dissipation.
Then one can develop protocols that saturate the dissipative-gap-dependent bounds.
Another question in this direction is whether the conditional evolution generated via a non-Hermitian Hamiltonian can also exhibit a dissipative-gap-dependent Lieb-Robinson bound that reduces to the conventional one in the dissipationless limit.

Setting aside the idea of a Lieb-Robinson bound that depends on the dissipative gap, there is still the question of generalizing the best-known closed-system bounds to Markovian evolution. In particular, the recent Lieb-Robinson bounds in Refs.~\cite{kuwaharaStrictlyLinearLight2020} and \cite{Tran2021b} both provide opportunities for generalization to open systems.
Such a result would likely require a modification of the interaction-picture technique first developed in \cite{Foss-Feig2015} and used in both subsequent works to open-system dynamics.
Generalizing these bounds would directly lead to tighter bounds on operator spreading in \cref{lemma:LRbound_truncated} and allow us to prove tighter bounds on correlation clustering in steady states (\cref{theorem:covariancebound} and \cref{thm:mutual_information_clustering})

Another way to probe the tightness of the steady-state correlation bounds derived in this work would be to improve the mixing bounds, which currently require the open system to be in thermal equilibrium.
It would be interesting to derive more general mixing bounds which also apply to systems that are out of thermal equilibrium.

One of the salient applications of Lieb-Robinson bounds is in rigorous proofs on the  stability of  the spectral gap in topologically ordered quantum matter. For example, Ref.~\cite{Bravyi2010} used closed-system Lieb-Robinson bounds to  show that spatially local perturbations will not close energy gaps in the toric code, thus leading to phase stability against arbitrary local noise. Can we use a similar approach to show that local perturbations will not close the dissipative gap in a topologically-ordered open system? A robust qubit steady-state structure would be useful toward the quest of passive quantum error correction \cite{Lidar1998}.

Lieb-Robinson bounds can be used  to prove area-law entanglement scaling in the ground state of one-dimensional systems with local interactions \cite{Hastings2007}. This result helps to rigorously justify the validity of the matrix-product state  ansatz for the ground state of such systems.  For closed systems with power-law interactions, Lieb-Robinson bounds can be used to further extend area-law scaling to certain broad classes of systems \cite{Gong2017}. Do the results presented in this paper have similar implications for area-law scaling of the steady state? This would have direct implications for the matrix-product operator ansatz in modeling open systems.

Finally, the Lieb-Robinson-type bounds we proved apply for the operator, or $\infty$-norm.
However, there exists a hierarchy of Lieb-Robinson-like bounds that have the potential to be tighter for certain information processing tasks such as scrambling and transferring a quantum state of a local subsystem without knowledge of the initial state of the rest of the system.
These bounds can use other norms such as the Frobenius norm defined by $\|O\|_F = \sqrt{\Tr{O^\dag O}}$ \cite{Tran2020hierarchylinearlightcones,Kuwahara2020aPolynomialGrowthOutoftimeorder,Yin2020ScramblingAlltoall,Chen2021Frobenius} or apply to free-particle systems \cite{Guo2019,Tran2020hierarchylinearlightcones}.
It would be interesting to generalize these bounds to open systems as well.

\begin{acknowledgments}
We would like to thank Andrew Lucas, Adam Ehrenberg, and Chris Baldwin for helpful discussions.
AYG is supported by the NSF Graduate Research Fellowship Program under Grant No.~DGE-1840340.
SL was supported by the NIST NRC Research Postdoctoral Associateship Award. AYG, MCT, and AVG acknowledge funding by the DoE ASCR Quantum Testbed Pathfinder program (award No.~DE-SC0019040), DoE Quantum Systems Accelerator, NSF PFCQC program, AFOSR MURI, DoE ASCR Accelerated Research in Quantum Computing program (award No.~DE-SC0020312), ARO MURI, AFOSR,  U.S.~Department of Energy Award No.~DE-SC0019449, and  DARPA SAVaNT ADVENT.
MCT acknowledges additional support from the Department of Energy, Office of Advanced Scientific Computing Research through the QOALAS program (grant 17-020469) and the Quantum Algorithms and Machine Learning grant from NTT, number AGMT DTD 9/24/20.
\end{acknowledgments}

\bibliography{open_cluster}

\appendix
\onecolumngrid
\section{Bounds on the error incurred by approximating time-evolved operators by local ones}
\label{app:sim-local-obs}

Here we use the open-system Lieb-Robinson bounds described in \cref{sec:open-LR} of the main text to derive the scalings in \cref{lemma:LRbound_truncated}.
Recall that $\tilde A(t)$ is the evolution of the operator $A$ under the Liouvillian $\td \L$, the restriction of the long-range Liouvillian $\L$ to $\mathcal{B}_r(X)$, the ball of radius $r$ centered on $X$, for time $t$.
We bound the difference between $A(t)$, which is $A$ evolved by the full Liouvillian, and $\tilde A(t)$ as follows:
\begin{align}
  \norm{A(t)-\tilde A(t)}&= \norm{\int_0^t \frac{d}{ds} \left[e^{\L ^\dagger(t-s)} e^{\td \L^\dagger s} A\right]\,\text{d}s} \\
  &= \norm{ \int_0^t e^{\L^\dagger (t-s)} (\L^\dagger-\tilde\L^\dagger)\tilde A(s)\,\text{d}s} \\
  &\le \int_0^t \sum_{j:\text{dist}(j,X)>r} \sum_{i:\text{dist}(i,X)\le r} \norm{\L^\dagger_{ij} \tilde A(s)}\,\text{d}s. \label{eq:clustering-bound}
\end{align}
In order to bound $\norm{\L^\dagger_{ij} \tilde A(s)}$, we turn to the open-system Lieb-Robinson bounds discussed in \cref{sec:open-LR}.
Each line of \cref{eq:LR-bound-cc} and \cref{eq:LR-bound-cc-small-alpha} will correspond to plugging in one of those bounds.
For ease of reference, we reproduce the scalings here:
\begin{equation}
		\label{app:LR-bound-cc}
		\mathcal C(r,t) \propto \begin{cases}
        \ds \frac{e^{\Th{N^{1-\al/d}}t}-1}{\Th{N^{1-\al/d}}},& \al < d,
        \\ \ds \frac{e^{\Th{\log(N)}t}-1}{\Th{\log(N)}},& \al= d.
		\\ \ds \frac{e^{vt}}{r^{\al-d}},& \al > d,
		\\ \ds \frac{t^{\alpha-d+1}}{r^{\alpha-3d}},& \al > 3d,
		\\ \ds \frac{t^2}{r^{\alpha-3}},& \al > 3, d=1.
	\end{cases}
\end{equation}
The calculations will be similar for each bound, so we will only demonstrate the result of inserting the power-law light cone bound from \cref{eq:LR-Minh-constX} into \cref{eq:clustering-bound}:
\begin{align}
  \norm{A(t)-\tilde A(t)} & \le C\norm{A}\int_0^t\text{d}s \sum_{j:\text{dist}(j,X)>r} \sum_{i:d(i,X)\le r} \norm{\L^\dagger_{ij}} \frac{s^{\al-d}}{\text{dist}(i,X)^{\al-2d}} \\
  & \le C\norm{A}\int_0^t\text{d}s \sum_{j:\text{dist}(j,X)>r} \sum_{i:\text{dist}(i,X)\le r} \frac{1}{\text{dist}(i,j)^\al} \frac{s^{\al-d}}{\text{dist}(i,X)^{\al-2d}} \\
  &\le C'\norm{A}\int_0^t\text{d}s \sum_{j:\text{dist}(j,X)>r} \frac{s^{\al-d}}{\text{dist}(j,X)^{\al-2d}} \\
  &\le C''\norm{A}\frac{t^{\al-d+1}}{r^{\al-3d}}.
\end{align}
This yields the expression in the fourth line of \cref{app:LR-bound-cc}.
Performing the same operations for the other bounds gives the other terms in \cref{app:LR-bound-cc}: the first and second lines come from \cref{eq:LR-ZX-open-small-alpha}; the third line comes from \cref{eq:LR-ZX-open}, and the last line comes from \cref{eq:chen-lucas-open-bound}.

\section{Variance  bound for reversible Liouvillians} \label{sec:var-bound}

Here we provide a derivation of the covariance bound used in Eq.~\eqref{eq:cov-bound}.  We show that $s$-reversibility is important for this bound to hold. We define the variance of an observable $f$  in the steady state $\sigma$ as $\text{Var}[f] = \text{Tr}[f^2 \sigma] - \text{Tr}[f \sigma]^2$, which is real and positive. We wish to find a bound for $\text{Var}[f_t]$ for the time-evolved observable $f_t = e^{\L^\dagger t} f$.

The Liouvillian is a non-Hermitian superoperator, which means that each eigenvalue has right and left eigenoperators:
\begin{equation}
\L (r_i) = \lambda_i r_i, \qquad \L^\dagger (l_i) = \lambda_i^* l_i.
\end{equation}
From the structure of the adjoint Liouvillian ($\L^\dagger$), it is clear that $\L^\dagger(\mathbb{I})=0$, where $\mathbb{I}$ is the identity operator. This implies that one of the eigenvalues $\lambda_0$ is zero, and the corresponding right eigenoperator $\sigma$ is called the steady state and satisfies $\L(\sigma)=0$ and $e^{\L t} (\sigma) = \sigma$.  The eigenoperators are ``bi-orthonormal'' via the Hilbert-Schmidt inner product: $\text{Tr}[l_i^\dagger r_j] = \text{Tr}[r_i^\dagger l_j] = \delta_{ij}$.

We define the superoperator $\Gamma_s(f) = ( \sigma^s f \sigma^{1-s} + \sigma^{1-s} f \sigma^s)/2$ where $s \in[0,1]$ and $\sigma$ is a full-rank, Hermitian operator with positive eigenvalues.  We say that a Liouvillian is $s$-reversible for some $s \in[0,1]$ if
$\Gamma_s \L^\dag = \L \Gamma_s$.
By acting both sides on the operator $\mathbb{I}$, we see that $\sigma$ is the steady state, i.e.~that $\L (\sigma)=0$. Imposing reversibility implies that the spectrum must be real because the Liouvillian is pseudo-Hermitian with a positive-definite metric \cite{Mostafazadeh2002}.

The  dynamics preserves Hermiticity of a density matrix, which implies that $\L(f^\dagger) = [\L(f)]^\dagger$, and the same for the adjoint: $\L^\dagger(f^\dagger) = [\L^\dagger(f)]^\dagger$, where $f$ is an arbitrary operator. This implies that (right and left) eigenoperators with real eigenvalues must be Hermitian. For $s$-reversible Liouvillians, the entire spectrum is real, which implies that all eigenoperators are Hermitian.

Ref.~\cite{Kastoryano2013} derives a bound for the time-evolved variance in a $s$-reversible system:
\begin{equation} \label{eq:bound}
\text{Var}[ f_t ] \leq e^{ - 2 \lambda_1 t}  \text{Var}[ f(t=0) ],
\end{equation}
where $\{ - \lambda_i \}$ is the real, non-positive spectrum of $\L$, sorted from smallest to largest magnitude with $\lambda_0 =0, \lambda_1 > 0$.   ($\lambda = \lambda_1$, i.e.~the dissipative gap.) Here we derive this bound using the properties of the eigenoperators of $\L$.

Consider a general Hermitian  operator which we write in terms of left eigenoperators
\begin{equation}
f = \sum_j c_j l_j \Rightarrow f_t = \sum_j c_j e^{ - \lambda_j t} l_j,
\end{equation}
where $c_j$ are real because $f$ is Hermitian.
Noting that  $\text{Tr}[l_j \sigma] = 0$ for $j \neq 0$, we find
\begin{equation}
\text{Var}[ f_t ]  =  \text{Tr} \left[  \left(  \sum_{j \neq 0} c_j e^{ - \lambda_j t} l_j \right)^2  \sigma \right] =      \sum_{j \neq 0} c_j^2 e^{ - 2 \lambda_j t},
\end{equation}
where in the last equality we have used $\text{Tr}[l_i r_j] = \text{Tr}[l_i \Gamma_s(l_j)] = \delta_{ij}$. From this, it is easy to see that $\lambda_{i > 1} \geq \lambda_1 $ implies the bound Eq.~\eqref{eq:bound}.  For the more general case of a complex spectrum, it is not clear how to repeat the derivation above. We therefore find that $s$-reversibility is sufficient for the bound to hold. (It is unclear whether $s$-reversibility is necessary for the bound.)

Given the bound Eq.~\eqref{eq:bound}, one can repeat the steps outlined in Eqs.~[49-55] in Ref.~\cite{Kastoryano2013} to obtain the bound used in Eq.~\eqref{eq:cov-bound} of the main text. For completeness, we include these steps below:
\begin{align}
|\text{Cov}_\sigma (f_t, g_t)| & \leq \sqrt{ \text{Var}(f_t) \text{Var}(g_t)  }  \label{eq:hold} \\
& \leq e^{-2 t \lambda_1} \sqrt{ \text{Var}(f) \text{Var}(g)  }  \label{eq:var1}.
\end{align}
The inequality in \eqref{eq:hold} is due to  Holder's inequality. The variance can be bounded by
\begin{align}
\sqrt{\text{Var}(f)} &= \sqrt{\Tr[ \sigma(f - \Tr[ \sigma f])^2 ] }  \\
& \leq \sqrt{ \lVert (f - \Tr[ \sigma f])^2 \lVert } \\
& \leq \lVert f - \Tr[ \sigma f] \lVert\\
& \leq \lVert f  \lVert + | \Tr[\sigma f]| \\
& \leq 2 \lVert f \lVert  \label{eq:var2}.
\end{align}
Putting together \eqref{eq:var1} and \eqref{eq:var2} leads to the desired bound (where $\lambda = \lambda_1$, i.e.~the dissipative gap):
\begin{equation}
|\text{Cov}_\sigma (f_t, g_t)| \leq 4 \norm{f} \norm{g}  e^{-2 \lambda t},
\end{equation}
which matches \cref{eq:cov-bound} in the main text.

\section{Bound on the difference between two operators evolving separately versus evolving together}
\setcounter{lemma}{1}
\label{app:connectedcorrs}
In this section, we provide the proof of the bound in \cref{lemma:connectedcorrs}. We restate the lemma here for convenience:
\begin{lemma}
  \label{lemma:connectedcorrs2}
    Take two operators $A$ and $B$ supported on single sites $X,Y \in \Lam$  respectively such that $r\coloneqq d(X,Y)$, and let $A(t)=e^{\L^\dag t}A$ and $B(t) = e^{\L^\dag t}B$ be their time-evolution under the Liouvillian superoperator $\L^\dag$.
    We also define $(AB)(t) = e^{\L^\dag t}(AB)$.
    Then the following bound holds:
    \begin{align}
        \|(AB)(t) - A(t)B(t)\| \le K'\|A\|\|B\| \mathcal C(r,t),
    \end{align}
    where $\mathcal C(r,t)$ is given by the Lieb-Robinson-type bound corresponding to the system in question (see \cref{lemma:LRbound_truncated}) and $K'$ is some constant that depends on lattice parameters.
\end{lemma}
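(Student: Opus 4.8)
The plan is to reduce the statement to the local-approximation error bound established in \cref{lemma:LRbound_truncated}, by writing both $(AB)(t)$ and $A(t)B(t)$ in terms of their truncated counterparts and exploiting the fact that, once everything is supported on a common ball, the two evolutions agree exactly. First I would fix the ball $\mathcal B_r(X)$ of radius $r$ around $X$ (recall $A$ is supported at $X$) and let $\td\L$ be the restriction of $\L$ to this ball, with $\td A(t)=e^{\td\L^\dag t}A$ and $(\widetilde{AB})(t)=e^{\td\L^\dag t}(AB)$. The key structural observation is that since $B$ is supported at $Y$ with $d(X,Y)=r$, the site $Y$ lies on (or just outside) the boundary of $\mathcal B_r(X)$; more usefully, I would instead take a slightly enlarged region containing both $X$ and $Y$, or symmetrize by using balls around both $X$ and $Y$ — whichever makes $\td\L^\dag$ act on $A$, $B$, and $AB$ in the same way. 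The point is that \emph{within} this common local region the truncated adjoint Liouvillian is an honest $*$-automorphism-like evolution on the relevant operator algebra, so it satisfies $(\widetilde{AB})(t)=\td A(t)\,\td B(t)$ exactly: the truncated Heisenberg evolution of a product of operators supported inside the retained region is the product of the truncated Heisenberg evolutions. (Here it is essential that $e^{\td\L^\dag t}$ acts as the adjoint of a completely positive trace-preserving semigroup restricted to the subalgebra; the product rule for $(\widetilde{AB})(t)$ holds because the generator is localized and the operators commute with everything outside the region.)

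Granting that exact identity, the rest is the triangle inequality. I would write
\begin{align}
\|(AB)(t)-A(t)B(t)\|
&\le \|(AB)(t)-(\widetilde{AB})(t)\|
+\|\td A(t)\td B(t)-A(t)B(t)\|,
\end{align}
using $(\widetilde{AB})(t)=\td A(t)\td B(t)$ in the middle. The first term is bounded by $K\|AB\|\,\mathcal C(r,t)\le K\|A\|\|B\|\,\mathcal C(r,t)$ directly from \cref{lemma:LRbound_truncated}. For the second term I would insert and remove $\td A(t)B(t)$ and use submultiplicativity of the operator norm together with the fact that all the Heisenberg-evolved operators have norm at most $\|A\|$ or $\|B\|$ respectively (the adjoint Lindblad semigroup is a contraction in operator norm, since it is unital and completely positive), obtaining
\begin{align}
\|\td A(t)\td B(t)-A(t)B(t)\|
&\le \|\td A(t)\|\,\|\td B(t)-B(t)\| + \|\td A(t)-A(t)\|\,\|B(t)\|
\le 2K\|A\|\|B\|\,\mathcal C(r,t),
\end{align}
again invoking \cref{lemma:LRbound_truncated} for each difference. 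Collecting terms gives the claim with $K'=3K$ (or whatever constant the bookkeeping produces).

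The main obstacle — the step that deserves the most care — is justifying the exact product identity $(\widetilde{AB})(t)=\td A(t)\,\td B(t)$ for the truncated adjoint Lindblad evolution. Unlike unitary Heisenberg dynamics, $e^{\L^\dag t}$ is generally \emph{not} multiplicative: $e^{\L^\dag t}(AB)\ne (e^{\L^\dag t}A)(e^{\L^\dag t}B)$ in general (indeed this failure is precisely the content the lemma is quantifying!). So the identity cannot be claimed for the full $\L^\dag$, and even for $\td\L^\dag$ it is false in general. The correct move is subtler: one should choose the truncation region so that \emph{both} $A$ and $B$ are supported strictly inside it and then use that $A(t)B(t)$ and $(AB)(t)$ differ only through Lindblad terms that straddle the supports of the two operators — i.e. one re-derives the lemma in the form where the relevant ``error generator'' is the sum of $\L_Z^\dag$ with $Z$ touching both a neighborhood of $X$ and a neighborhood of $Y$, and bounds its contribution via the Lieb-Robinson tail. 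In other words, the cleanest route is to mimic the Duhamel/integral-equation argument of \cref{app:sim-local-obs}: write $\frac{d}{ds}\big[e^{\L^\dag(t-s)}\big(e^{\L^\dag s}A\,e^{\L^\dag s}B\big)\big]$, observe that the derivative produces exactly the ``cross terms'' $\sum_{Z}(\L_Z^\dag(AB)-\L_Z^\dag(A)B-A\,\L_Z^\dag(B))$ which vanish unless $Z$ overlaps both supports, and then bound the resulting integral by $\mathcal C(r,t)$ exactly as in \cref{eq:clustering-bound}. I would present the proof in that Duhamel form, since it sidesteps the false multiplicativity claim entirely and makes the appearance of $\mathcal C(r,t)$ transparent; the contraction property of the adjoint semigroup in operator norm is the only auxiliary fact needed, and it is standard.
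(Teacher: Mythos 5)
Your final plan---the Duhamel form---has a genuine gap at its central step. Writing $F(s)=e^{\L^\dag (t-s)}\bigl(e^{\L^\dag s}A\,e^{\L^\dag s}B\bigr)$, the derivative produces the Leibniz defect evaluated on the \emph{time-evolved} operators,
\begin{align}
\partial_s F(s)=e^{\L^\dag(t-s)}\Bigl[\L^\dag\bigl(A(s)B(s)\bigr)-\L^\dag\bigl(A(s)\bigr)B(s)-A(s)\,\L^\dag\bigl(B(s)\bigr)\Bigr],
\end{align}
and for a single jump operator $L_Z$ this defect equals $[L_Z^\dag,A(s)]\,[B(s),L_Z]$ (the Hamiltonian part is a derivation and cancels). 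Your claim that these cross terms ``vanish unless $Z$ overlaps both supports'' would be true only if $A(s)$ and $B(s)$ were still supported near $X$ and $Y$; under the full long-range evolution they are not, so no term vanishes exactly. To salvage this route you would have to bound each factor $\|[L_Z^\dag,A(s)]\|$ and $\|[B(s),L_Z]\|$ by Lieb-Robinson tails, take minima with the trivial bound, and control a double sum over $Z$ together with the time integral---none of which is carried out, and it is not obvious the result reassembles into $K\|A\|\|B\|\,\mathcal C(r,t)$ without essentially redoing the analysis behind \cref{lemma:LRbound_truncated}.

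Ironically, the identity you rejected is exactly how the paper proceeds, and it is correct once the truncation is chosen properly: take $\td\L$ to consist of the terms of $\L$ acting entirely inside the two \emph{disjoint} balls of radius $r/2$ centered at $X$ and at $Y$ (not one connected region containing both). Then $\td\L^\dag=\td\L_X^\dag+\td\L_Y^\dag$ with the two pieces supported on disjoint regions; each adjoint semigroup is unital and preserves the support within its own ball, and $\td\L_Y^\dag$ annihilates operators supported in the $X$-ball. Hence $e^{\td\L^\dag t}(AB)=\bigl(e^{\td\L_X^\dag t}A\bigr)\bigl(e^{\td\L_Y^\dag t}B\bigr)=\td A(t)\,\td B(t)$ holds \emph{exactly}---not because adjoint Lindblad evolution is multiplicative (it is not), but because the truncated evolution factorizes over the two disjoint balls. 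Your objection that multiplicativity fails for a single truncation region containing both operators is correct but simply does not apply to this choice. With this identity in hand, your first display (triangle inequality, \cref{lemma:LRbound_truncated} applied to $A$, to $B$, and to the two-site operator $AB$, together with contractivity $\|A(t)\|\le\|A\|$ and submultiplicativity of the operator norm) is precisely the paper's proof and closes the argument with $\mathcal C(r/2,t)$, the factor of two being absorbed into the constant $K'$.
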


\begin{proof}
We define the semi-group $\td \L^\dag$ to be the terms in $\L^\dag$ that act entirely within balls of radius $r/2$ centered around $X$ and $Y$.
Then, let $\tilde A(t)$ be the time-evolved version of $A$ under $\td \L^\dag$ and likewise for $\tilde B(t)$.
By definition, this implies that $\tilde A(t)\tilde B(t) = (\widetilde{AB})(t)$.
We then get
\begin{align}
\label{eq:connectedcorr1}
  \|(AB)(t)-A(t)B(t)\| \le \| (AB)(t) - (\widetilde{AB})(t) \| + \|A(t)B(t)-\tilde A(t) \tilde B(t) \|.
\end{align}
The first term on the RHS of \cref{eq:connectedcorr1} may be bounded by the Lieb-Robinson bound stated in \cref{lemma:LRbound_truncated} (for an operator that is initially supported on two sites instead of one). The second term can be bounded by
\begin{align}
  \|A(t)B(t)-\tilde A(t) \tilde B(t) \| &\le \|A(t)(B(t)-\tilde B(t))\|+\|(A(t)-\tilde A(t))\tilde B(t)\|\\
  &\le \|A\|\|B(t)-\tilde B(t)\|+\|A(t)-\tilde A(t)\|\|B\|,
\end{align}
using $\|A(t)\|\le \|A\|$ and the submultiplicativity of the operator norm.
Using the Lieb-Robinson bound again, we get
\begin{equation}
    \|(AB)(t) - A(t)B(t)\| \le 2K\|A\|\|B\|\mathcal C(r,t),
\end{equation}
which is the same as \cref{eq:ops-evolving-together} in the main text.
\end{proof}

\section{Effect of perturbations on reduced steady-state density matrix}
\label{app:proof-stability}
In this section, we provide the proof of \cref{thm:stability_results}. The argument hews closely to that of Lemma 11 in Ref.~\cite{Kastoryano2013}, but uses the Lieb-Robinson bounds for open long-range systems given in the main text.
\setcounter{theorem}{1}
\begin{theorem}
\label{thm:stability_results}
Let $X,Y$ be two non-overlapping subsets of a $d$-dimensional cubic lattice $\Lambda$.  Let $\L $  be a primitive and $s$-reversible Liouvillian with log-Sobolev constant $\beta$, and let $\mathcal{Q}$ be a local Liouvillian perturbation, acting trivially outside of $X$. Let $\rho$ be the stationary state of $\L$, and let $\sigma$ be the stationary state of $\L  + \mathcal{Q}$. Then,
\begin{equation}
\label{eq:perturbation_bounds_app}
\lVert \rho_Y - \sigma_Y  \lVert_1 \leq \begin{cases}
       c \log (  \lVert \rho^{-1}  \lVert)^{\frac12} \left(\frac1{r^{\alpha - d}} \right)^{\frac{2 \beta}{ v + 2\beta}},& \al > d,
    \\ c \log (  \lVert \rho^{-1}  \lVert)^{\frac12} \frac{\log(r)^{\al-d+1}}{r^{\al-3d}},& \al > 3d,
    \\ c \log (  \lVert \rho^{-1}  \lVert)^{\frac12} \frac{\log(r)^2}{r^{\al-3}},& \al> 3,
    \end{cases}
\end{equation}
where $c$ is some constant, and $r$ is the distance between $X$ and $Y$.
\end{theorem}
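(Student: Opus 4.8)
The plan is to follow the structure of Lemma~11 in Ref.~\cite{Kastoryano2013}, adapted to the open long-range setting: interpolate between the two stationary states by a time-evolved state, bound the two resulting pieces by the mixing bound (\cref{cor:mixing}) and the operator-spreading bound (\cref{lemma:LRbound_truncated}), and then optimize over the evolution time exactly as in the proof of \cref{theorem:covariancebound}. Concretely, introduce the interpolant $\omega_t \coloneqq e^{\L t}(\sigma)$ — the perturbed stationary state $\sigma$ evolved under the \emph{unperturbed} Liouvillian $\L$. Since $\L$ is primitive, $\omega_t \to \rho$ as $t\to\infty$, and since $\sigma$ is stationary for $\L+\mathcal{Q}$ we also have $\sigma = e^{(\L+\mathcal{Q})t}(\sigma)$ for every $t$. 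As the partial trace is a contraction in trace norm, the triangle inequality gives $\norm{\rho_Y-\sigma_Y}_1 \le \norm{\rho-\omega_t}_1 + \norm{(\omega_t)_Y-\sigma_Y}_1$ for all $t\ge 0$. The first term is a pure relaxation-to-stationarity quantity for $\L$ alone, bounded by \cref{cor:mixing} as $\norm{\rho-e^{\L t}(\sigma)}_1 \le \sqrt{2\log\norm{\rho^{-1}}}\,e^{-\beta t}$ (the reversibility and primitivity hypotheses ensure $\rho$ is full rank, so the lemma applies); choosing this interpolant rather than $e^{(\L+\mathcal{Q})t}(\rho)$ keeps the mixing estimate expressed purely in terms of the unperturbed data $\beta$ and $\norm{\rho^{-1}}$, with no need to track how these quantities shift under the perturbation.

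For the second term one exploits that $\mathcal{Q}$ is supported on $X$, at distance $r$ from $Y$. Testing against an observable $O_Y$ on $Y$ with $\norm{O_Y}\le 1$ and passing to the adjoint picture, $\Tr[O_Y((\omega_t)_Y-\sigma_Y)] = \Tr[(e^{\L^\dag t}O_Y - e^{(\L+\mathcal{Q})^\dag t}O_Y)\sigma]$. Let $\widetilde{\L}$ be the restriction of $\L$ to the ball of radius $r/2$ about $Y$; since this ball is disjoint from $X$, $\widetilde{\L}$ is \emph{simultaneously} the restriction of $\L$ and of $\L+\mathcal{Q}$. Inserting $e^{\widetilde{\L}^\dag t}O_Y$ splits $\norm{e^{\L^\dag t}O_Y - e^{(\L+\mathcal{Q})^\dag t}O_Y}$ into two ``truncation errors'' of the form bounded by \cref{lemma:LRbound_truncated} — one for $\L$, the other for $\L+\mathcal{Q}$, which still satisfies \cref{eq:L-norm-bound} because $\mathcal{Q}$ has bounded support. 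Taking the supremum over $O_Y$ and absorbing $|X|,|Y|$ into constants gives $\norm{(\omega_t)_Y-\sigma_Y}_1 \le K'\,\mathcal{C}(r,t)$, with $\mathcal{C}$ the relevant line of \cref{eq:LR-bound-cc}. An equivalent route avoids the interpolant and starts from the identity $\rho-\sigma = -\int_0^\infty e^{\L t}\mathcal{Q}(\sigma)\,dt$, splitting the $t$-integral into a short-time part controlled by \cref{lemma:LRbound_truncated} and a long-time tail controlled by the mixing/gap bound.

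Combining the two estimates gives $\norm{\rho_Y-\sigma_Y}_1 \le \sqrt{2\log\norm{\rho^{-1}}}\,e^{-\beta t} + K'\mathcal{C}(r,t)$ for every $t$, and minimizing the right-hand side over $t$ reproduces the three cases of \cref{thm:stability_results} by the same computation as in \cref{theorem:covariancebound}: for $\alpha>d$ one solves $dh/dt=0$ with $\mathcal{C}\propto e^{vt}/r^{\alpha-d}$, and for $\alpha>3d$ (resp.\ $\alpha>3$, $d=1$) one uses the ansatz $t^{\ast} = 1 + \tfrac{\alpha-3d}{\beta}\log r$ (resp.\ its $d=1$ analog), which turns the polynomial-in-$t$ light-cone function into the stated $\log(r)^{\alpha-d+1}/r^{\alpha-3d}$ (resp.\ $\log(r)^2/r^{\alpha-3}$) decay, carrying along the $(\log\norm{\rho^{-1}})^{1/2}$ prefactor.

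I expect the main obstacle to be the second step — rigorously formalizing the intuition that the perturbation $\mathcal{Q}$ ``cannot be felt'' in $Y$ until the Lieb--Robinson time for distance $r$. The two observations that make it go through are: (i) restricting $\L$ and $\L+\mathcal{Q}$ to a ball around $Y$ that avoids $X$ produces the \emph{same} generator, so the difference of the two global evolutions, viewed on $Y$, is nothing but twice a truncation error of the type controlled by \cref{lemma:LRbound_truncated}; and (ii) the particular choice $\omega_t = e^{\L t}(\sigma)$ keeps the mixing contribution expressed solely in terms of the unperturbed data. A minor additional check is that $\L+\mathcal{Q}$ inherits \cref{eq:L-norm-bound}, which is immediate since $\mathcal{Q}$ acts on the finite region $X$.
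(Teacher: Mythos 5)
Your proposal is correct and takes essentially the same route as the paper's proof in Appendix D: interpolate between the two stationary states with a time-evolved state, control the relaxation piece with the log-Sobolev mixing bound and the perturbed-versus-unperturbed difference with Lieb--Robinson truncation errors through a common generator restricted away from $X$, then optimize over $t$ exactly as in the covariance-correlation theorem. Your specific interpolant $e^{\mL t}(\sigma)$ simply specializes the paper's arbitrary reference state $\phi$ to $\sigma$, which eliminates one of the three terms in the paper's triangle inequality and one extra application of the mixing bound --- a harmless (indeed slightly cleaner) streamlining that yields the same final bound.
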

\begin{proof}
 We use the following definition of the trace norm:
\begin{equation}
\frac{1}{2} \lVert \rho - \sigma \lVert_1 = \max_{0\leq A \leq \mathbb{I}} \tr[A (\rho-\sigma)],
\end{equation}
for positive semi-definite $A$. This implies
\begin{equation}
\lVert \rho_Y - \sigma_Y \lVert_1 = 2 \tr[ (A_Y \otimes \mathbb{I}_{Y^c}) (\rho - \sigma)],
\end{equation}
where $A_Y = \tr_{Y^c}[ \text{argmax}_{0\leq A \leq \mathbb{I}} \tr[A (\rho-\sigma)]]$.
We use the triangle inequality
\begin{align} \label{eq:triangle}
\tr[ (A_Y \otimes \mathbb{I}_{Y^c}) (\rho - \sigma)] &= \tr[ (A_Y \otimes \mathbb{I}_{Y^c}) \left[(e^{\L t } - e^{(\L +\mathcal{Q})t})(\phi) + (\sigma - e^{(\mL +\mQ) t}(\phi)) + (e^{\mL t}(\phi) - \rho)\right]] \\
&\le \tr[ (A_Y \otimes \mathbb{I}_{Y^c}) (e^{\L t } - e^{(\L +\mathcal{Q})t}) (\phi)] + \frac{1}{2} \lVert \tr_{Y^c}[ \sigma - e^{(\mL +\mQ) t}(\phi) ] \lVert_1
+ \frac{1}{2} \lVert  \tr_{Y^c}[ e^{\mL t}(\phi) - \rho ] \lVert_1,
\end{align}
where $\phi$ is an arbitrary state. Note that we have introduced two time-evolved operators in this step.
We will now use a combination of mixing bounds and Lieb-Robinson bounds to restrict the RHS.
The last term is bounded via the log-Sobolev bound:
\begin{equation}
 \frac{1}{2} \lVert  \tr_{Y^c}[ e^{\mL t}(\phi) - \rho ] \lVert_1  \leq \left( \frac{1}{2} \log (\lVert \rho^{-1} \lVert )  \right)^{\frac12} e^{- \beta t}.
\end{equation}
This is basically an upper bound on how fast an arbitrary initial state must converge towards the steady state.
The second term in Eq.~\eqref{eq:triangle} can be bounded using a combination of Lieb-Robinson bounds and the log-Sobolev bound:
\begin{align}
\frac{1}{2} \lVert \tr_{Y^c}[ \sigma - e^{(\mL +\mQ) t}(\phi) ] \lVert_1  &=   \tr[ A_Y  e^{(\mL +\mQ) t} (  \sigma - \phi) ]   \\
&= \tr[ e^{(\mL^\dagger +\mQ^\dagger) t} (A_Y)   (  \sigma - \phi) ]  \\
&\leq \tr[ (e^{(\mL^\dagger +\mQ^\dagger) t } -  e^{\mL^\dagger t }) (A_Y)   (  \sigma - \phi) ] +  \tr[  e^{\mL^\dagger t } (A_Y)   (  \sigma - \phi) ].
\end{align}
The last term can again be bounded via the log-Sobolev bound:
\begin{align}
\tr[  e^{\mL^\dagger t } (A_Y)   (  \sigma - \phi) ] &\leq \frac{1}{2} \lVert e^{\mL t} (  \sigma - \phi) \lVert_1 \leq \left( 2 \log (\lVert \rho^{-1} \lVert )  \right)^{\frac12} e^{- \beta t}.
\end{align}
The first term can be bounded via the Lieb-Robinson bound:
\begin{align}
\tr[ (e^{(\mL^\dagger +\mQ^\dagger) t } -  e^{\mL^\dagger t }) (A_Y)   (  \sigma - \phi) ]  &\leq   \tr[ (e^{(\mL^\dagger +\mQ^\dagger) t } -  e^{\mL^\dagger t }) (A_Y)  ] \lVert \sigma - \phi \lVert_1 \\
&\leq 2 \tr[ (e^{(\mL^\dagger +\mQ^\dagger) t } -  e^{\mL^\dagger t }) (A_Y)  ] \\
&\leq 2 \tr[ (e^{(\mL^\dagger +\mQ^\dagger) t } -  e^{\mL_{X^c}^\dagger t }) (A_Y)  ] + 2 \tr[ (e^{\mL_{X^c}^\dagger  t } -  e^{\mL^\dagger t }) (A_Y)  ] \\
&\leq K \lVert A_Y \lVert \,\mathcal \mathcal{C}(r, t),
\end{align}
where $\mL_{X^c}$ is the  Liouvillian restricted to terms that do not intersect $X$. $K$ is an arbitrary constant, and $\mathcal \mathcal{C}(r, t)$ is the Lieb-Robinson bound stated in \cref{lemma:LRbound_truncated}.

The first term in Eq.~\eqref{eq:triangle} can be bounded using  the Lieb-Robinson approach above. Gathering all the bounds together leads to
\begin{equation}
\lVert \rho_Y - \sigma_Y  \lVert_1  \leq K_1 \left(\log (\lVert \rho^{-1} \lVert )  \right)^{\frac12} e^{- \beta t} + K_2 \,\mathcal \mathcal{C}(r, t)
\end{equation}
for arbitrary constants $K_1,K_2$. We wish to pick a time $t$ that minimizes the RHS.
We now note that the RHS has the same functional form as the function that we needed to minimize for the covariance correlation bound. Repeating the minimization procedure outlined in Theorem \ref{theorem:covariancebound}, we arrive at the stated bounds in \cref{eq:perturbation_bounds} of the main text.
\end{proof}

\section{Generalization of the Tran \etal~bound to open long-range systems}
\label{sec:minh-bound-proof}
Here we provide the derivation of the open-systems Lieb-Robinson bound in \cref{eq:LR-Minh-constX}.
We use the generalization of the Hastings \& Koma bound to open systems, as described in \cite{Sweke2019}.
Let $K_Y\in \mathbb{L}_Y$ be a Liouvillian with support contained in $Y$ and $\tau(t) \equiv  e^{\L^\dagger t}$ be the backwards time-evolution operator.
The corresponding superoperator bound is
\begin{align}
  \mathcal C(r,t) \equiv \norm{  K_Y(\tau(t) A)} \leq C \|K_Y\|_{\infty} \norm A  \abs{X}\abs{Y}
    \frac{e^{vt}-1}{r^{\alpha}},
    \label{app:LR-HK-open}
\end{align}
If the supports of operators $K_Y$ and $A$ are not constant, then summing \cref{app:LR-HK-open} over the sites in those supports gives a bound of
\begin{align}
  \mathcal C(r,t) \le \|K_Y\|_{\infty} \norm A
    \phi(Y)\frac{e^{vt}}{r^{\alpha-d-1}},
    \label{eq:LR-ZX-open-many-site}
\end{align}
 where $\phi(Y)$ denotes the boundary of $Y$. For simplicity, we will later write this bound in the form
\begin{align}
  C(r,t)\le \|K_Y\|_{\infty}\norm{A}\phi(Y)f(r,t).
  \label{eq:eq:LR-HK-open}
\end{align}

To derive the open-systems Lieb-Robinson bound in \cref{eq:LR-Minh-constX}, we follow the proof in Tran \etal~\cite{Tran2019b}.
We first divide up the time interval $[0,t]$ into $M$ timesteps of size $\Delta t \equiv t/M$ and let $t_i = it/M$ for $i=0,\dots,M$.
For brevity, we denote by $\tau_i \equiv \tau(t_{M-i},t_{M-i+1})$ the time-evolution operator from time $t_{M-i}$ to $t_{M-i+1}$.
We can decompose the evolution of $A$ by $\tau(t)$ into $M$ timesteps:
\begin{align}
  \tau(t)A = \tau_M \tau_{M-1}\dots \tau_1 A.
\end{align}
We then approximate the evolution by $\tau_1$ by a truncated operator $A_1$ such that
\begin{align}
  \norm{\tau_1 A - A_1} = \eps_1,
\end{align}
where $A_1$ is supported on sites at most a distance $\ell$ from the support of $A$.
We repeat the above approximation for the other time intervals to get
\begin{align}
  &\norm{\tau_2 A_1 - A_2} = \eps_2,\\
  &\norm{\tau_3 A_2 - A_3} = \eps_3,\\
   &\dots \nonumber\\
  &\norm{\tau_M A_{M-1} - A_M} = \eps_M.
\end{align}
At the end of this process, we have approximated $\tau(t)A$ by an operator $A_M$ supported on sites located a distance of $M\ell$ from the support of $A$.
We bound the error of this approximation using the triangle inequality:
\begin{align}
  \norm{\tau_M\dots\tau_1 A - A_M} \le \eps_1 + \dots + \eps_M.
\end{align}
By choosing $M\ell$ slightly less than $r$, we guarantee that the support of $A_M$ does not overlap with $X$, which implies that $K_Y(A_M) = 0$ and therefore that the commutator
\begin{equation}
	\mathcal C(r,t) =  \norm{  K_Y(\tau A)} \leq \norm{  K_Y(\tau A-A_M)} + \norm{  K_Y(A_M)} = \norm{  K_Y(\tau A-A_M)}
\end{equation} is at most the error of the approximation: $\eps \equiv \eps_1 + \dots +\eps_M$.
To find a bound on $\eps_1$, we trace out the part of $\tau_1 A$ that lies outside of $\mathcal A_\ell(Y)$, the ball of radius $\ell$ around the support of $A$:
\begin{align}
  A_1 \equiv \frac{1}{\Tr(\mathbb I_{\mathcal A_\ell(Y)^c})} \Tr_{\mathcal A_\ell(Y)^c} (\tau_1 A) \otimes \mathbb I_{\mathcal A_\ell(Y)^c}
     = \int_{\mathcal A_\ell(Y)^c} d\mu(W) W (\tau_1 A) W^\dag,\label{eq:traceint}
\end{align}
where $S^c$ denotes the complement of the set $S$ and the trace is rewritten as an integral over Haar unitaries $W$ supported on $\mathcal A_\ell(Y)^c$, and $\mu(W)$ denotes the Haar measure.

Now the error from approximating $\tau_1 A$ with $A_1$ is given by
\begin{align}
    \eps_1 =\norm{\tau_1 A - A_1}
    &= \norm{\tau_1 A - \int_{\mathcal A_\ell(Y)^c} d\mu(W) W (\tau_1 A) W^\dag}\\
    &= \norm{\int_{\mathcal A_\ell(Y)^c} d\mu(W) \left[\tau_1 A - W (\tau_1 A) W^\dag\right]}\\
    &\leq \int_{\mathcal A_\ell(Y)^c} d \mu (W) \norm{\left[\tau_1 A,W\right]}.
    \label{eq:eps_1-bound}
\end{align}
Plugging this into \cref{eq:eps_1-bound} gives
\begin{align}
  \eps_1 &= \norm{\tau_1 A - A_1} \le  \int_{\mathcal A_\ell(Y)^c} d \mu (W) \norm{A}\phi(Y)f(\ell,\Delta t)
  = \abs{A}\phi(Y) f(\ell,\Delta t),
\end{align}
where $\Delta t = t/M$ is the size of each timestep.
Applying this to all of the errors yields
\begin{equation}
  \eps_j \le \abs{A} \phi(X_j)f(\ell,\Delta t),
\end{equation}
where $X_j$ is the support of $A_j$.
Thus the new bound is
\begin{align}
 \mathcal C(r,t) \le 2\|K_Y\|_{\infty} \eps &\le 2M\|K_Y\|_{\infty}\abs{A} \phi_\text{max}f(\ell,\Delta t)\\
 &= 2 \|K_Y\|_{\infty}\abs{A}\frac{t}{\Delta t}\phi_\text{max}f(\ell,\Delta t),
\end{align}
where $\phi_\text{max} = \max_j \phi(X_j)$, and we replaced $M$ with $t/\Delta t$.
Without loss of generality, we may set $\Delta t = 1$.
Using the form of $f(r,t)$ given in \cref{eq:LR-ZX-open-many-site}, this yields the bound
\begin{align}
   \mathcal C(r,t) &\le C \|K_Y\|_{\infty} \norm A t \phi_{\max} \frac{e^v}{\left(\frac rt\right)^{\al-d-1}} \\
   &\le C \|K_Y\|_{\infty} \norm A \frac{t^{\al-d}}{r^{\al-2d}},
\end{align}
which matches \cref{eq:LR-Minh-constX} in the main text.

\section{Generalization of the Chen \& Lucas bound to open long-range systems}
\label{sec:chen-lucas-bound-proof}
In this section, we provide the proof of the bound in \cref{eq:chen-lucas-open-bound}, which generalizes the closed-system Lieb-Robinson bound from \cite{Chen2019} to open systems.
In the process, we improve the tail of the bound from $1/r$ to $1/r^{\al-2-o(1)}$.
Our goal is to prove that, for an operator $A\in \mathcal{B}(X)$ supported on $X$, for $K_Y\in \mathbb{L}_Y$ a superoperator supported on $Y$, and for backward time-evolution operator $e^{\L^\dagger t}$, we have
\begin{align}
	\norm{  K_Y(e^{\L^\dagger t} A)} \leq C \norm{K_Y}_{\infty} \norm A \frac{t}{r^{\alpha-2}}.
\end{align}
To do that, we use a trivial bound
\begin{align}
    \norm{ K_Y(e^{\L^\dagger t} A)} \le 2\norm{K_Y}_{\infty} \norm{\Py e^{\L^\dagger t}A},
\end{align}
where $\Py$ is the projector onto operators supported on sites at distance $Y$ and beyond.
We will now represent the operator $A$ by its vectorized form $\oket{A}$, so that $\Py$ acting on $A$ can be viewed as a superoperator acting on the vectorized operator: $\Py(A)=\Py\oket{A}$.
Also, from here on out, we will represent $\L^\dagger$ by $\L$ for notational convenience.

The quantity that we wish to bound is $\norm{\Py e^{\L t} \oket{A}}$, which can be expanded in a series
\begin{align}
	\norm{\Py e^{\L t} \oket{A}} &= \sum_{n=0}^\infty \frac{t^n}{n!}\L^n\oket{A} = \sum_{n=0}^{\infty} \frac{t^n}{n!}\sum_{\beta_1,\beta_2,\dots,\beta_n} \L_{\beta_n}\dots\L_{\beta_2} 	\L_{\beta_1}\oket{A},
	\label{eq:seriesexpansion}
\end{align}
where the $\beta_i$ correspond either to single-site terms or two-body couplings, which we will refer to as ``jumps.''

\subsection{More definitions}
We need a few more definitions before we can proceed.
Consider a sequence of jumps $\bm \beta = (\beta_n,\dots,\beta_1)$.
First, we denote by $\nu(\bm \beta)$ the number of jumps in $\bm \beta$ and $\nu_{q}(\bm \beta)$ the number of order-$q$ jumps in $\bm \beta$.
By ``order-$q$'' jumps, we mean jumps that are of length at least $2^{q-1}$ and less than $2^q$.
For example, $\nu_1(\bm \beta)$ is the number of nearest-neighbor jumps in $\bm \beta$.
$\nu_2(\bm\beta)$ counts the number of jumps of length $2,3$.
Given a jump $\beta$, $\dist(\beta,y)$ is the minimum distance from the support of $\beta$ to $y$.
The distance between a sequence of jumps $\bm\beta$ to $y$ is the minimum distance between each jump and $y$.
We also define a number $N_q$ for each $q$ as follows:
\begin{align}
	N_q = \ceil{\frac{\mu}{2^{q\gamma}}\frac{r}{2^q}},\label{eq:Nq1}
\end{align}
where $\gamma\in(0,1)$ is a parameter to be chosen later, and where $\mu<2$ is a constant chosen to be small enough that
\begin{align}
	\sum_{q = 1}^{\infty} (N_q-1) 2^q \leq \mu r \sum_{q=1}^\infty 2^{-q\gamma} < r.
\end{align}

We list the other definitions below (see \cref{fig:def} for a diagram):
\begin{itemize}
\item Given a sequence of jumps $\bm \beta$, we define its $q$-forward subsequence according to \cref{def:q-forward}.
\begin{definition}\label{def:q-forward}
Given a sequence of jumps $\bm \beta = (\beta_n,\dots,\beta_1)$, its $q$-forward subsequence $\bm \lambda^{(q)}$ is constructed as followed:
\begin{itemize}
	\item Set $\bm \lambda^{(q)} = \{\}$ to be an empty sequence and define $\dist(\{\},y) = \dist(x,y)$.
	\item For $j=1,\dots,m$:
	\begin{itemize}
		\item If $\dist(\beta_j,y)<\dist(\bm \lambda^{(q)},y)$ and $\beta_j$ is an order-$q$ jump, add $\beta_j$ to $\bm \lambda^{(q)}$.
	\end{itemize}
\end{itemize}
\end{definition}
We denote by $\mathcal F$ the map from $\bm \beta$ to its set of $q$-forward subsequences $\Lambda = \{\bm \lambda^{(q)}:q = 1,\dots,r\}$. This map is many-to-one.
\item If the $q$-forward subsequence $\bm \lambda^{(q)}$ has at least $N_q$ jumps, we construct the irreducible $q$-forward subsequence $\bm \lambda'^{(q)}$ by taking exactly the first $N_q$ jumps in $\bm \lambda^{(q)}$. Otherwise, we say that there is no irreducible $q$-forward subsequences.
\item
We denote the map from $\Lambda = \{\bm \lambda^{(q)}\}$ to the set of irreducible $q$-forward subsequences $\Lambda'=\{\bm \lambda'^{(q)}\}$ by $\mathcal T$.
Note that $\abs{\Lambda'}$ can be less than $\abs{\Lambda}$ because the length of $\bm \lambda^{(q)}$ may be less than $N_q$ for some $q$.
\item From a set $\Lambda' = \{\bm \lambda'^{(q_1)},\dots,\bm \lambda'^{(q_k)}\}$ of irreducible $q$-forward subsequences, we define $\mathcal I(\Lambda') = \{\bm\beta:\mathcal T(\mathcal F(\bm\beta) ) \supseteq \Lambda'\}$ to be the set of sequences $\bm \beta$ that has $\Lambda'$ in its set of irreducible $q$-forward subsequences.
\end{itemize}

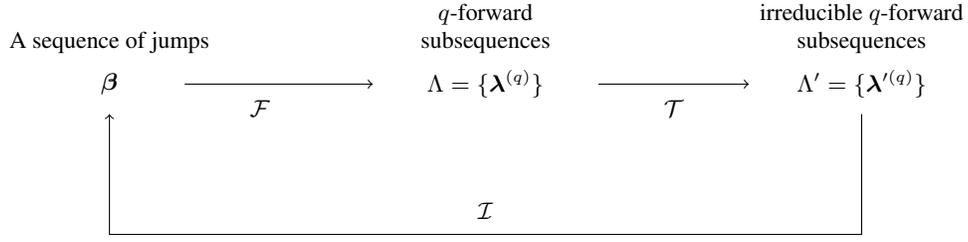
\begin{figure}[h]
\centering
\begin{tikzpicture}
\node[anchor = south] at (0,0.3){
\begin{varwidth}{3cm}
	A sequence of jumps
\end{varwidth}
};
\node[] at (0,0){
	$\bm \beta$
};
\node[] at (0,-0.5){

};
\node[] at (5,0){
	$\Lambda = \{\bm \lambda^{(q)}\}$
};
\node[anchor = south] at (5,0.3){
\begin{varwidth}{3cm}
	$q$-forward subsequences
\end{varwidth}
};
\node[] at (10,0){
	$\Lambda' = \{\bm \lambda'^{(q)}\}$
};
\node[anchor = south] at (10,0.3){
\begin{varwidth}{3cm}
	irreducible $q$-forward subsequences
\end{varwidth}
};
\draw[->] (1,0) -- (3.5,0);
\node[anchor = north] at (2,-0.1){
	$\mathcal{F}$
};
\draw[->] (6.5,0) -- (8.5,0);
\node[anchor = north] at (7.5,-0.1){
	$\mathcal{T}$
};
\draw[->] (10,-0.4) -- (10,-2) -- (0,-2) -- (0,-0.4);
\node[anchor = south] at (5,-1.9){
	$\mathcal{I}$
};
\end{tikzpicture}
\caption{A summary of the definitions regarding sequences and subsequences.}
\label{fig:def}
\end{figure}

\subsection{Proof}
\Cref{lem:exist-long-q} below guarantees that, for each sequence $\bm \beta$ that contributes to \cref{eq:seriesexpansion}, there exists at least one irreducible $q$-forward subsequence $\bm \lambda'^{(q)}$ for some $q$.
\begin{lemma}\label{lem:exist-long-q}
	For each sequence $\bm \beta$,
	if $\Py \L_{\bm \beta}\oket{A}\neq 0$, then there exists at least one $q$-forward subsequence such that $\nu_q(\bm \lambda^{(q)})\geq N_q$.
\end{lemma}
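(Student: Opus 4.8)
Here is the plan. I would prove \cref{lem:exist-long-q} contrapositive-style, by a ``frontier'' (reachable-region) argument: assuming $\nu_q(\bm\lambda^{(q)})\le N_q-1$ for \emph{every} $q$, I will show that the region which $\L_{\bm\beta}\oket A$ can occupy never comes within distance $r\coloneqq\dist(X,Y)$ of $Y$, so that $\Py\L_{\bm\beta}\oket A=0$. The bookkeeping device is the nondecreasing family of sets $S_0\subseteq S_1\subseteq\dots\subseteq S_n$ with $S_0=\operatorname{supp}(A)$ and $S_j=S_{j-1}\cup\operatorname{supp}(\beta_j)$. Because each term $\L_{\beta_j}$ of the adjoint Liouvillian is supported on $\beta_j$ and (termwise: the Hamiltonian commutator and every Lindblad piece) annihilates any operator whose support is disjoint from $\beta_j$, one has $\operatorname{supp}(\L_{\beta_j}\cdots\L_{\beta_1}\oket A)\subseteq S_j$ for all $j$, and moreover if some $\beta_j$ has $\operatorname{supp}(\beta_j)\cap S_{j-1}=\emptyset$ then the whole product vanishes. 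Since $\Py\L_{\bm\beta}\oket A\neq0$ implies $\L_{\bm\beta}\oket A\neq0$, every jump therefore overlaps the running set, and I may track the frontier $\rho_j\coloneqq\dist(S_j,Y)$, which is nonincreasing, starts at $\rho_0=r$, and must satisfy $\rho_n=0$ (the support of $\L_{\bm\beta}\oket A$ reaches $Y$).

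The core step is the invariant, proved by induction on $j$,
\[
\rho_j\;\ge\; r-\sum_{q\ge1}\nu_q\bigl(\bm\lambda^{(q)}|_{j}\bigr)\,2^{q},
\]
where $\bm\lambda^{(q)}|_{j}$ is the $q$-forward subsequence obtained by running \cref{def:q-forward} on only $\beta_1,\dots,\beta_j$ (crucially, the processing order in that definition coincides with the order in which the jumps act on $\oket A$). Alongside it I need the auxiliary monotonicity fact $\dist(\bm\lambda^{(q)}|_{k},Y)\ge\rho_k$, which holds because every jump $\beta_{j'}$ recorded in $\bm\lambda^{(q)}|_{k}$ has $j'\le k$, hence $\operatorname{supp}(\beta_{j'})\subseteq S_{j'}\subseteq S_k$ and $\dist(\beta_{j'},Y)\ge\rho_{j'}\ge\rho_k$. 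For the inductive step: if $\beta_j$ does not advance the frontier ($\rho_j=\rho_{j-1}$), the right-hand side can only decrease and the bound is inherited; if it does ($\rho_j<\rho_{j-1}$), then $\beta_j$ must be a two-body jump of some order $q$ with one endpoint in $S_{j-1}$ and the other at distance $\rho_j$ from $Y$, so $\rho_{j-1}-\rho_j\le|\beta_j|<2^{q}$; and since $\dist(\beta_j,Y)=\rho_j<\rho_{j-1}\le\dist(\bm\lambda^{(q)}|_{j-1},Y)$, the jump $\beta_j$ gets appended to $\bm\lambda^{(q)}$ and to no other subsequence. Hence the right-hand side drops by exactly $2^{q}$ while $\rho$ drops by strictly less, so the invariant survives.

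Evaluating at $j=n$ with $\rho_n=0$ gives $\sum_{q}\nu_q(\bm\lambda^{(q)})\,2^{q}\ge r$. But the $N_q$ were chosen (see \cref{eq:Nq1}) precisely so that $\sum_q(N_q-1)2^{q}\le\mu r\sum_q2^{-q\gamma}<r$; therefore $\nu_q(\bm\lambda^{(q)})\le N_q-1$ for all $q$ would force $\sum_q\nu_q(\bm\lambda^{(q)})2^{q}<r$, a contradiction. So some $q$ satisfies $\nu_q(\bm\lambda^{(q)})\ge N_q$, which is the assertion. (Jumps of length $\ge r$, formally of order $q>\log_2 r$, need no separate treatment, since then $N_q=1$ and a single such jump already meets the bound.)

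The step I expect to be the main obstacle is establishing the auxiliary monotonicity fact together with the claim that a frontier-advancing jump is \emph{always} recorded in the correct $q$-forward subsequence: this is where the precise conventions---the processing order, the ``strictly closer than the current record'' criterion, and the distinction between single-site and two-body jumps---all have to line up, and it is the place where an off-by-one or an ordering slip would most easily creep in. Once the invariant is in hand, the final counting step follows immediately from the defining inequality for the $N_q$.
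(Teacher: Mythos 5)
Your proof is correct and follows essentially the same route as the paper: assume $\nu_q(\bm\lambda^{(q)})\le N_q-1$ for all $q$, establish $r\le\sum_q\nu_q(\bm\lambda^{(q)})2^q$, and contradict the defining property $\sum_q(N_q-1)2^q<r$ of the $N_q$. The only difference is that the paper asserts the key inequality ``by the construction of $\bm\lambda$,'' whereas you supply the explicit frontier-invariant justification (including the facts that disjoint jumps annihilate the operator and that every frontier-advancing order-$q$ jump is necessarily recorded in $\bm\lambda^{(q)}$), which is a faithful filling-in of that implicit step.
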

The proof of this lemma is straightforward. If there exists no such $q$, then $\nu_\ell(\bm\lambda^{(q)})\leq N_q-1$ for all $q$.
By the construction of $\bm \lambda$:
\begin{align}
	r \leq \sum_{q = 1}^r \nu_q(\bm\lambda^{(q)}) 2^q
	\leq \sum_{q = 1}^r (N_q-1) 2^q<r,
\end{align}
which is a contradiction.

In the following, we use the notation $\chi_q$ to denote whether $\bm \beta$ has an irreducible $q$-forward subsequence:
\begin{align}
	\chi_q \L_{\bm \beta} \oket{A} =
	\begin{cases}
	\L_{\bm \beta} \oket{A} &\text{if } \exists \bm \lambda'^{(q)} \in \mathcal T(\mathcal F(\bm\beta))),\\
	0 &\text{otherwise.}
	\end{cases}
\end{align}
We can rewrite the series expansion of \cref{eq:seriesexpansion} as
\begin{align}
	\Py e^{\L t}\oket{A} &= \Py \sum_{n=0}^{\infty} \frac{t^n}{n!} \sum_{\bm \beta }\L_{\bm \beta} \oket{A}\\
	&= \Py \left[1-\prod_{q=1}^\infty(1-\chi_q)\right]\sum_{n=0}^{\infty} \frac{t^n}{n!} \sum_{\bm \beta }\L_{\bm \beta} \oket{A},\label{eq:expandchi}
\end{align}
where \cref{lem:exist-long-q} ensures that $1-\prod_{\ell}(1-\chi_\ell) = 1$ for all sequences that contribute to \cref{eq:seriesexpansion}.
Expanding the product over $\ell$, we will get terms of the form
\begin{align}
\mathcal S(q_1,\dots,q_k) &= (-1)^{k+1}\Py \chi_{q_1}\chi_{q_2}\dots\chi_{q_k}\sum_{n=0}^{\infty} \frac{t^n}{n!} \sum_{\bm \beta }\L_{\bm \beta} \oket{A}\\
&=(-1)^{k+1}\Py \sum_{n=0}^{\infty} \frac{t^n}{n!} \sum_{\bm \lambda'^{(q_1)}} \dots\sum_{\bm \lambda'^{(q_k)}}
\sum_{\substack{\bm\beta\in\mathcal I(\{\bm\lambda'^{(q_1)},\dots,\bm\lambda'^{(q_k)}\})\\
\length(\bm\beta) = n
}}
\L_{\bm \beta} \oket{A},
\end{align}
for some distinct integers $q_1,\dots,q_k$.
In the last line, we sum over all possible irreducible $q$-forward subsequences $\bm \lambda^{(q)}$, for $q = q_1,\dots,q_k$, then sum over all sequences $\bm \beta$ which contains $\{\bm\lambda'^{(q_1)},\dots,\bm\lambda'^{(q_k)}\}$ in its set of irreducible $q$-forward subsequences.

We will now upper-bound $\norm{\mathcal S(q_1,\dots,q_k)}$.
First, let $\bm \lambda'$ be a sequence consisting of all jumps in $\bm\lambda'^{(q_1)},\dots,\bm\lambda'^{(q_k)}$ such that the set of irreducible $\ell$-forward subsequences of $\bm \lambda'$ is exactly $\{\bm\lambda'^{(q_1)},\dots,\bm\lambda'^{(q_k)}\}$.
From $\bm\lambda'$, we construct $\bm \beta$:
\begin{align}
	\bm \beta = \left(\beta_{m+1,j_{m+1}},\dots,\beta_{m+1,1}\lambda'_m,\dots,\lambda_2 \beta_{2,j_2},\dots,\beta_{2,1},\lambda'_1,\beta_{1,j_1},\dots,\beta_{1,1}\right),\label{eq:betamunu-power-law}
\end{align}
where $(\lambda'_m,\dots,\lambda'_1) = \bm \lambda'$, $j_1,\dots,j_{m+1}$ are nonnegative integers, $\beta_{i,j} \in \Gamma_i$, and the sets $\Gamma_i$ are constructed recursively for $i=1,\dots,m+1$ as follows:
\begin{itemize}
	\item $
	\Gamma_1 =  \{(x',y'):\dist((x',y'),y)<\dist(x,y) \text{ if } (x',y') \text{ is an order-$q$ jump, where $q = q_1,\dots,q_k$}\}.$
	\item Set $c_q = r$ for all $q = q_1,\dots,q_k$. Each $c_q$ will remember the distance from $y$ to the last length-$q$ jump.
	For the sake of the proof, let $c_q = \infty$ for all other $q$.
	\item For $i = 2$ to $m$:
	\begin{itemize}
	\item $\Gamma_i = \{(x',y'):\dist((x',y'),y)<c_{q(x',y')}.$
	\item Update $c_q = \dist(\lambda'_i,y)$, where $q$ is the order of the jump $\lambda'_i$.
	\end{itemize}
	\item $\Gamma_{m+1} = \{(x',y')\}$ is the set of all possible jumps.
\end{itemize}
The point of this construction is that each sequence $\bm \beta$ appears exactly once.
We can then rewrite
\begin{align}
\mathcal S(q_1,\dots,q_k)
&=(-1)^{k+1}\Py \sum_{n=0}^{\infty} \frac{t^n}{n!} \sum_{\bm \lambda'^{(q_1)}} \dots\sum_{\bm \lambda'^{(q_k)}}
\sum_{\substack{\bm\beta\in\mathcal I(\{\bm\lambda'^{(q_1)},\dots,\bm\lambda'^{(q_k)}\})\\
\length(\bm\beta) = n
}}
\L_{\bm \beta} \oket{A},\\
&=(-1)^{k+1}\Py  \sum_{\bm \lambda'^{(q_1)}} \dots\sum_{\bm \lambda'^{(q_k)}}
\sum_{\bm\lambda'}
\sum_{j_{m+1}=0}^\infty
\dots
\sum_{j_1=0}^\infty
\frac{t^{m+\sum_{l=1}^{m}{j_l}}}{(m+\sum_{l=1}^{m}{j_l})!}
\L_{\Gamma_{m+1}}^{j_{m+1}}\L_{\lambda_{m+1}}\dots\L_{\lambda_{1}}\L_{\Gamma_{1}}^{j_1} \oket{A},\\
&=(-1)^{k+1}\Py  \sum_{\bm \lambda'^{(q_1)}} \dots\sum_{\bm \lambda'^{(q_k)}}
\sum_{\bm\lambda'}
\int_{\Delta^m(t)}dt_1\dots dt_m
e^{\L_{\Gamma_{m+1}}^{j_{m+1}}(t-t_m)}\L_{\lambda_{m+1}}\dots\L_{\lambda_{1}}e^{\L_{\Gamma_{1}}^{j_1}t_1} \oket{A},
\end{align}
where $\Delta^m(t)$ is the simplex defined by $0\leq t_1\leq\dots \leq t_m\leq t$.
Now, we use the triangle inequality:
\begin{align}
\norm{\mathcal S(q_1,\dots,q_k)}&\leq
\frac{3}{2}  \sum_{\bm \lambda'^{(q_1)}} \dots\sum_{\bm \lambda'^{(q_k)}}
\sum_{\bm\lambda'}
\frac{t^m}{m!}
\frac{1}{q_1^{\alpha N_{q_1}}} \dots \frac{1}{q_k^{\alpha N_{q_k}}}\\
&\leq\frac{3}{2}  \binom{r2^q}{N_{q_1}}\dots\binom{r2^q}{N_{q_k}}
\binom{m}{N_{q_1},\dots,N_{q_k}}
\frac{t^m}{m!}
\frac{1}{2^{\alpha q_1 N_{q_1}}} \dots \frac{1}{2^{\alpha q_k N_{q_k}}}\\
&=\frac{3}{2} \prod_{i=1,\dots,k} \left[\binom{r2^q_i}{N_{q_i}}\frac{t^{N_{q_i}}}{N_{q_i}!}
\frac{1}{2^{\alpha q_i N_{q_i}}}\right],
\end{align}
where in the last two lines we use the fact that $m = N_{q_1}+\dots+N_{q_k}$.
Plugging this bound into \cref{eq:expandchi}, we have
\begin{align}
	\norm{\Py e^{\L t}\oket{A}} \leq -1 + \prod_{q} \left[1+\frac{3}{2}\binom{r 2^q}{N_{q}}\frac{t^{N_{q}}}{N_{q}!}
\frac{1}{q^{\alpha N_{q}}}\right].
\end{align}
Now we use $1+x \leq e^x$ to bound
\begin{align}
	\norm{\Py e^{\L t}\oket{A}} \leq -1 + \exp\left[\frac{3}{2}\sum_{q} \binom{r 2^q}{N_{q}}\frac{t^{N_{q}}}{N_{q}!}
	\frac{1}{q^{\alpha N_{q}}}\right].\label{eq:APPF_sumoverq}
\end{align}
Let $q_*$ be the largest integer such that $2^{q_*(\gamma+1)}\leq (\mu r)^{1-\gamma}$.
Note that $\mu r/2^{q(\gamma+1)}>1$ for all $q\leq q_*$.
We divide the sum in \cref{eq:APPF_sumoverq} into two parts:
\begin{align}
	\sum_{q} \binom{r2^q}{N_{q}}\frac{t^{N_{q}}}{N_{q}!}
	\frac{1}{2^{\alpha q N_{q}}}
	&
	\leq \underbrace{\sum_{q=1}^{q_*-1} \binom{r2^q}{N_{q}}\frac{t^{N_{q}}}{N_{q}!}
	\frac{1}{2^{\alpha q N_{q}}}}_{=S_1}
	+\underbrace{\sum_{q=q_*}^{r}\frac{rt}{2^{(\alpha-1) q}}}_{=S_2}.\label{eq:twosum}
\end{align}
First, we estimate $S_2$:
\begin{align}
S_2 \leq \frac{1}{1-2^{-\alpha}}\frac{rt}{2^{q_*(\alpha-1)}}
\leq \underbrace{\frac{1}{1-2^{-\alpha}} \mu^{(1-\alpha)/(\gamma+1)} }_{=c_3}\frac{t}{r^{\frac{\alpha-1}{\gamma+1}-1}}.
\end{align}
Next, we estimate $S_1$.
Note that $N_q\geq \frac{\mu r}{2^{q(\gamma+1)}}$ for all $q$:
\begin{align}
S_1
&\leq \sum_{q=1}^{q_*-1}\left(\frac{e^2rt}{N^2_q 2^{(\alpha-1) q}}\right)^{N_q}\\
&\leq\sum_{q=1}^{q_*}\left(\frac{e^2t}{\mu^2 r}2^{q(2\gamma+3-\alpha)}\right)
^{N_q},\label{eq:boundonS1}
\end{align}
where we have used the Stirling's approximation $x!>x^xe^{-x}$.
When $q\rightarrow 1$, $N_q \propto r$.
The corresponding term in $S_1$ decays with $r$ at least exponentially as $(t/r)^r$.
On the other hand, when $q\rightarrow q_*$, $N_q \rightarrow 1$ and the corresponding term in $S_1$ is instead suppressed by $2^{q(2\gamma+3-\alpha)}$ for all $\alpha>3+2\gamma$.
This limit analysis suggests that we should use two different bounds on $S_1$ for small $q$ and large $q$.
For that, we define
\begin{align}
	q_0 \equiv \floor{\frac{1}{1+\gamma}\log_2(\mu r^\kappa)} \le \frac{1}{1+\gamma}\log_2(\mu r^\kappa)
\end{align}
and divide up $S_1$ into two sums over $q\le q_0$ and $q_0 < q \le q_*$:
\begin{align}
	S_1 \leq \underbrace{\sum_{q=1}^{q_0-1}\left(\frac{e^2t}{\mu^2 r}2^{q(2\gamma+3-\alpha)}\right)^{N_q}}_{=S_{1a}} + \underbrace{\sum_{q=q_0}^{q_*}\left(\frac{e^2t}{\mu^2 r}2^{q(2\gamma+3-\alpha)}\right)^{N_q}}_{=S_{1b}}.
\end{align}
First, we take the sum over $q \le q_0$. We assume that $\alpha > 2\gamma+3$ and $t\le \mu^2r/e^2$, so that the inner summand satisfies
\begin{align}
	\left(\frac{e^2t}{\mu^2 r}2^{q(2\gamma+3-\alpha)}\right) \le 1
\end{align}
for all $q\le q_0$. Because $N_q$ decreases with $q$, we upper bound
\begin{align}
	S_{1a}
	= \sum_{q=1}^{q_0}\left(\frac{e^2t}{\mu^2 r}2^{q(2\gamma+3-\alpha)}\right)^{N_q}
	&\le \left(\frac{e^2t}{\mu^2 r}\right)^{N_{q_0}} \sum_{q=1}^{q_0}2^{q(2\gamma+3-\alpha)N_q } \\
		&\lesssim \left(\frac{e^2t}{\mu^2 r}\right)^\frac{\mu r}{2^{q_0(\gamma+1)}}\\
		&\le \left(\frac{e^2t}{\mu^2 r}\right)^{r^{1-\kappa}}\\
		&\le \frac t r e^{-r^{1-\kappa}},
\end{align}
where in the last line we further assume $t\le \mu^2r/e^2$.
This gives the sum over $q \le q_0$ in the term $S_1$.
To bound the sum over $q_0 < q \le q_*$, we note that $N_{q-1}\geq N_{q} +1$ for all $q<q_*$.
To prove this, suppose $N_{q-1} = N_q$.
That means
\begin{align}
	&\frac{\mu r}{2^{(q-1)(\gamma+1)}}<N_{q-1} = N_{q} \leq \frac{\mu r}{2^{q(\gamma+1)}} +1\\
	\Leftrightarrow & 1 > (2^{\gamma+1}-1) \frac{\mu r}{2^{q(\gamma+1)}} > \frac{\mu r}{2^{q(\gamma+1)}},
\end{align}
which contradicts with $\mu r/2^{q(\gamma+1)}> 1$ for all $q< q_*$.
Therefore, $N_{q-1}\geq N_q + 1$ for all $q<q_*$.
Since $N_{q_*} = 1$, it follows that $N_{q_*-n}\geq n+1>n$ for all $n\geq 1$.
We make the substitution $n= q_* - q$ to obtain
\begin{align}
S_{1b} &= \sum_{q=q_0}^{q_*}\left(\frac{e^2t}{\mu^2 r}2^{q(2\gamma+3-\alpha)}\right)^{N_q} \leq \sum_{n=1}^{q_*-q_0}\left(\frac{e^2t}{\mu^2 r}2^{(q_*-n)(2\gamma+3-\alpha)}\right)^{n},\\
\end{align}
again assuming that $\alpha>3+2\gamma$ and $e^2 t/(\mu^2 r)<1$.
Now, using the fact that $q_*-n \ge q_0$, we have
\begin{align}
	2^{(q_*-n)(2\gamma+3-\alpha)}\le 2^{q_0(2\gamma+3-\alpha)} \leq r^{\kappa(2\gamma+3-\alpha)}.
\end{align}
Plugging this into the sum yields
\begin{align}
\sum_{n=1}^{q_*-q_0}\left(\frac{e^2t}{\mu^2 r}2^{(q_*-n)(2\gamma+3-\alpha)}\right)^{n}
&\leq\sum_{n=1}^{q_*-q_0}\left(\frac{e^2t}{\mu^2 r}r^{\kappa(2\gamma+3-\alpha)}\right)^{n}\\
&= \frac{e^2 t}{\mu^2 r^{1-\kappa(2\gamma+3-\alpha)}}\sum_{n=0}^{q_*-q_0-1}\left(\frac{e^2t}{\mu^2 r^{1-\kappa(2\gamma+3-\alpha)}}\right)^{n} \\
&\leq \frac{e^2 t}{\mu^2 r^{1-\kappa(2\gamma+3-\alpha)}} \frac{1}{1 - \frac{e^2 t}{\mu^2 r^{1-\kappa(2\gamma+3-\alpha)}}}\\
&\leq \underbrace{2\frac{e^2}{\mu^2}}_{=c_2} \frac{t}{r^{1-\kappa(2\gamma+3-\alpha)}},
\end{align}
assuming that $\frac{e^2 t}{\mu^2 r^{1-\kappa(2\gamma+3-\alpha)}}\leq \frac 12$.
Combining everything, we have
\begin{align}
S_1 + S_2 \leq c_1 \left(\frac t r e^{-r^{1-\kappa}}\right) + c_2 \frac t{r^{1-\kappa(2\gamma+3-\alpha)}}+ c_3 \frac{t}{r^{\frac{\alpha-1}{1+\gamma}-1}}.\label{eq:two-terrm-after-sum}
\end{align}
We make the simplification that $\kappa = 1-\gamma$, so that
\begin{align}
	1-\kappa(2\gamma+3-\alpha) = 1-(1-\gamma)(\alpha-3-2\gamma) = \alpha-2\underbrace{-2\gamma-\gamma\alpha+3\gamma+2\gamma^2}_{=o(1)}. \label{eq:sdasdhnks}
\end{align}
In addition, for all $\gamma>0$, there exists a constant $c_\gamma$ that may depend on $\alpha$ such that
\begin{align}
	e^{-r^{\gamma}} \leq c_\gamma \frac{1}{r^{\alpha-3}}
\end{align}
for all $r>0$.
Therefore,
\begin{align}
	\frac{t}{r} e^{-r^{\gamma}} \leq c_\gamma \frac{t}{r^{\alpha-2}}.\label{eq:dskads}
\end{align}
Substituting \cref{eq:dskads,eq:sdasdhnks} into \cref{eq:two-terrm-after-sum} and letting $c = c_1 c_\gamma + c_2+c_3$, we have the desired bound:
\begin{align}
	\norm{\Py e^{\L t} \oket{A}} \le c \frac{t}{r^{\alpha-2-o(1)}},
\end{align}
which is exactly \cref{eq:chen-lucas-open-bound} in the main text.

\end{document}